\newfont{\fsc}{eusm10}      
\renewcommand{\phi}{\varphi}
\newcommand{\plat}[1]{\raisebox{0pt}[0pt][0pt]{#1}}     
\newcommand{\inp}{\mathop\in}                           
\newcommand{\setof}[2]{\{ \, #1 \, \mid \, #2 \, \}}
\def \BISI{\sim}
\def \WBISI{\approx}
\def \SI{\prec}
\def \WSI{\precsim}
\newcommand{\Act}{{\sf Act}}
\newcommand{\Var}{{\sf Var}}
\newcommand{\Env}{{\sf Env}}
\def \fv{{\it fv}}
\def \CR{{\cal R}}
\def \cL{{\cal L}}
\def \cF{{\cal F}}
\def \true{{\tt true}}
\def \false{{\tt false}}
\def \Op#1{[\![ #1 ]\!]}
\newcommand{\diam}[1]{\langle#1\rangle}
\newcommand{\boxm}[1]{[#1]}
\newcommand{\refuse}[1]{\mathop{\mathbf{ref}(#1)}}
\newcommand{\pow}[1]{\mathop{\mbox{\fsc P}} ({#1})   }
\newcommand{\dist}[1]{\mathop{\mbox{$\cal D$}} ({#1})   } 
\newcommand{\pdist}[1]{\overline{#1}  } 
\newcommand{\support}[1]{\lceil{#1}\rceil}
\newcommand{\lift}[1]{\mathrel{#1}^\dag}
\newcommand{\aRel}{\ensuremath{\mathrel{{\cal R }}}}
\newcommand{\forsim}{\mathrel{\lhd_{\raisebox{-.1em}{\tiny\it S}}}}
\newcommand{\failsim}{\mathrel{\lhd_{\raisebox{-.1em}{\tiny\it FS}}}}
\newcommand{\sset}[1]{\{ {#1}  \}  } 
\newcommand{\ar}[1]{\mathrel{\plat{$\buildrel #1 \over \longrightarrow$}}}
\newcommand{\dar}[1]{\mathrel{\plat{$\buildrel #1 \over \Longrightarrow$}}}
\newcommand{\darhat}[1]{\mathrel{\plat{$\buildrel \hat{#1} \over \Longrightarrow$}}}
\newcommand{\arh}[1]{\mathrel{\buildrel #1 \over \longrightarrow}}
\newcommand{\darhhat}[1]{\mathrel{\buildrel \hat{#1} \over \Longrightarrow}}
\newcommand{\nar}[1]{\hspace{6pt}\not\hspace{-6pt}\ar{#1\;}}
\begin{document}

\title{Characterising Probabilistic Processes Logically}
\author{Yuxin Deng\inst{1}\fnmsep\inst{2}%
\thanks{\hspace{-2pt}Deng was supported by the National Natural
Science Foundation of China
\mbox{(\hspace{-.5pt}60703033\hspace{-.5pt})\hspace{-.5pt}}.}
 \and Rob van Glabbeek\inst{3}\fnmsep\inst{4}}
\institute{Dept. Comp. Sci. \& Eng. and MOE-Microsoft Key Lab for Intell. Comp. \& Syst., Shanghai Jiao Tong University, China
\and State Key Lab of Comp. Sci., Inst. of Software, Chinese Academy of Sciences
 \and NICTA, Australia \and University of New South Wales, Australia} \maketitle

\begin{abstract}
In this paper we work on (bi)simulation semantics of processes that
exhibit both nondeterministic and probabilistic behaviour.
We propose a probabilistic extension of the modal mu-calculus and show
how to derive characteristic formulae for various simulation-like
preorders over finite-state processes without divergence.
In addition, we show that even without the fixpoint operators
this probabilistic mu-calculus can be used to characterise these
behavioural relations in the sense that two states are equivalent
if and only if they satisfy the same set of formulae.
\end{abstract}

\section{Introduction}

In concurrency theory, behavioural relations such as equivalences and
refinement preorders form a basis for establishing system correctness.
Usually both specifications and implementations are expressed as
processes within the same framework, in which a specification describes
some high-level behaviour and an implementation gives the technical
details for achieving the behaviour. Then one chooses an equivalence
or preorder to verify that the implementation realises the behaviour
required by the specification.

A great many behavioural relations are defined on top of labelled
transition systems, which offer an operational model of systems. For
finitary (i.e.\ finite-state and finitely branching) systems, these
behavioural relations can be computed in a mechanical way, and thus
may be incorporated into automatic verification tools. In recent
years, probabilistic constructs have been proven useful for giving
quantitative specifications of system behaviour. The first papers on
probabilistic concurrency theory \cite{GJS90,Chr90,LS91}
proceed by \emph{replacing} nondeterministic with probabilistic
constructs. The reconciliation of nondeterministic and probabilistic
constructs starts with \cite{HJ90} and has received a lot of
attention in the literature
\cite{WL92,SL94,Lowe95,Seg95,HK97,MM97,BS01,JW02,mislove03,CIN05,TKP05,MM07,DGHMZ07,DGHM08,DWTC09}.
We shall also work in a framework that features the co-existence of
probability and nondeterminism.

Among the behavioural relations that have proven useful in
probabilistic concurrency theory are various types of
\emph{simulation} and \emph{bisimulation} relations.
Axiomatisations for bisimulations have been investigated in
\cite{BS01,DP07}. Logical characterisations of bisimulations and
simulations have been studied in \cite{SL94,PS07}. For example, in
\cite{SL94} the probabilistic computation tree logic (PCTL) \cite{HJ94}
is used and it turns out that two states are
bisimilar if and only if they satisfy the same set of PCTL formulae.

In the nonprobabilistic setting, there is a line of research on
characteristic formulae. The goal is to seek a particular formula
$\phi_s$ for a given state $s$ such that a necessary and sufficient
condition for any state $t$ being bisimilar to $s$ is to satisfy
$\phi_s$ \cite{SI94}. This is a very strong property in the sense
that to check if $t$ is bisimilar to $s$ it suffices to consider the
single formula $\phi_s$ and see if it can be satisfied by $t$. It
offers a convenient method for equivalence or preorder checking.

In this paper we partially extend the results of \cite{SI94} to a
probabilistic setting that admits both probabilistic and
nondeterministic choice; to make the main ideas neat we do not
consider divergence. We present a probabilistic extension of the
modal mu-calculus \cite{Koz83} (pMu), where a formula is interpreted
as the set of probability distributions satisfying it. This is in
contrast to the probabilistic semantics of the mu-calculus as
studied in
 \cite{HK97,MM97,MM07} where formulae denote lower bounds of
probabilistic evidence of properties, and the semantics of the
generalised probabilistic logic of \cite{CIN05} where a mu-calculus
formula is interpreted as a set of deterministic trees that satisfy
it.

We shall provide characteristic formulae for strong and weak
probabilistic (bi)simulation as introduced in \cite{SL94,Seg95}, as
well as forward simulation \cite{Seg95} and failure simulation
\cite{DGHM08}. The results are obtained in two phases, which we
illustrate by taking strong probabilistic bisimilarity $\BISI$ as an
example. Given a finite-state probabilistic labelled transition system
with state space $\{s_1,...,s_n\}$,
we first construct an equation system $E$ of modal formulae in pMu.
\[\begin{array}{rcl}
E: X_{s_1} & = & \phi_{s_1} \\
       & \vdots & \\
   X_{s_n} & = & \phi_{s_n}
\end{array}\]
A solution of the equation system is a function $\rho$ that assigns
to each variable $X_{s_i}$ a set of distributions $\rho(X_{s_i})$.
The greatest solution of the equation system, denoted by $\nu_E$,
has the property that $s_i\BISI s_j$ if and only if the point
distribution $\pdist{s_j}$ is an element of $\nu_E(X_{s_i})$. In the
second phase, we apply three transformation rules upon $E$ in order
to obtain a pMu formula $\phi^\sim_{s_i}$ whose  meaning
$\Op{\phi^\sim_{s_i}}$ is exactly captured by $\nu_E(X_{s_i})$.  As
a consequence, we derive a characteristic formula for $s_i$ such
that $s_i\BISI s_j$ if and only if
$\pdist{s_j}\in\Op{\phi^\sim_{s_i}}$.

Without the fixpoint operators pMu gives rise to a probabilistic
extension of the Hennessy-Milner logic \cite{HM85}. In analogy to the
nonprobabilistic setting, it characterises (bi)simulations in the
sense that $s\BISI t$ if and only if the two states $s,t$ satisfy
the same set of formulae.

 The paper is
organised as follows. In Section~\ref{s:plts} we recall the
definitions of several (bi)simulations defined over probabilistic
labelled transition systems. In  Section~\ref{s:pmu} we introduce
the syntax and semantics of pMu. In Section~\ref{s:ces}
we build characteristic equation systems and derive from them
characteristic formulae for all our (bi)simulations. In
Section~\ref{s:mod.char} we consider the fixpoint-free fragment of pMu
which characterises a state by the class of formulae it satisfies.
Finally, in Section~\ref{s:cr} we provide some concluding remarks.

\section{Probabilistic (bi)simulations}\label{s:plts}
In this section we recall several probabilistic extensions of
simulation and bisimulation \cite{Mil89a} that appeared in the
literature.

We begin with some notation concerning probability distributions.
A \emph{(discrete) probability distribution} over a set $S$ is a function
{$\Delta\!: S \rightarrow [0,1]$} with \mbox{$\sum_{s\in S}\!
\Delta(s)=1$}; the \emph{support} of $\Delta$ is given by
$\support{\Delta} = \setof{s \in S}{\Delta(s) > 0}$. We write
$\dist{S}$, ranged over by $\Delta, \Theta$, for the set of all
distributions over $S$.
We also write $\pdist{s}$ to denote the point distribution assigning
probability 1 to $s$ and 0 to all others, so that
$\support{\pdist{s}} = \{s\}$.
If $p_i\geq 0$ and $\Delta_i$ is a distribution for each $i$ in some
index set $I$, and $\sum_{i\in I}p_i = 1$, then the probability
distribution $\sum_{i \in I}p_i \cdot \Delta_i \in \dist{S}$ is
given by $  (\sum_{i \in I}p_i \cdot \Delta_i)(s) = \sum_{i \in I}
p_i \cdot\Delta_i(s)$; we will sometimes write it as $p_1 \cdot
\Delta_1 + \ldots + p_n \cdot \Delta_n$ when $I =
\sset{1,\ldots,n}$.

We now present the operational model that we shall use in the
remainder of the paper.

\begin{definition}\rm
A \emph{finite state probabilistic labelled transition system} (pLTS)
is a triple $\langle S, \Act_\tau, \rightarrow \rangle$, where
\begin{enumerate}\vspace{-1ex}
\item $S$ is a finite set of states
\item $\Act_\tau$ is a set of external actions $\Act$
  augmented with an internal action $\tau\mathbin{\not\in}\Act$
\item $\mathord\rightarrow \;\subseteq\; S \times \Act_\tau \times \dist{S}$.
\end{enumerate}
\end{definition}
We usually write $s \ar{a} \Delta$ for $(s,a,\Delta) \inp
\mathord\rightarrow$, $s \ar{a}$ for $\exists \Delta : s
\ar{a}\Delta$, and $s \nar{a}$ for the negation of $s\ar{a}$. We
write $s\nar{A}$ with $A\subseteq\Act$ when $\forall a\in
A\cup\{\tau\}: s\nar{a}$, and $\Delta\nar{A}$ when $\forall
s\in\support{\Delta}: s\nar{A}$. A pLTS is \emph{finitely branching}
if, for each state $s$, the set $\sset{(a,\Delta)\mid
s\ar{a}\Delta}$ is finite. A pLTS is
\emph{finitary} if it is finite-state and finitely branching.

To define probabilistic (bi)simulations, it is often necessary to lift a
relation over states to one over distributions.
\begin{definition}\label{d:lift}\rm
Given two sets $S$ and $T$ and a relation $\mathord{\CR} \subseteq S
\mathop{\times} T$. We lift $\CR$ to a relation
$\mathord{\lift{\CR}} \subseteq \dist{S} \mathop{\times} \dist{T}$
by letting $\Delta \lift{\CR} \Theta$ whenever
\begin{enumerate}\vspace{-1ex}
\item
$\Delta = \sum_{i\in I}{p_i \cdot \pdist{s_i}}$, where
  $I$ is a countable index set and  $\sum_{i\in I}{p_i} = 1$

\item
for each $i \in I$ there is a state $t_i$ such that $s_i \aRel t_i$

\item
$\Theta = \sum_{i \in I} {p_i \cdot \pdist{t_i}}$.
\end{enumerate}
\end{definition}

\noindent Note that in the decomposition of $\Delta$, the states
$s_i$  are not necessarily distinct: that is, the decomposition is
not in general unique, and similarly for the decomposition of
$\Theta$. For example, if
$\CR=\sset{(s_1,t_1),(s_1,t_2),(s_2,t_3),(s_3,t_3)}$,
$\Delta=\frac{1}{2}\pdist{s_1}+\frac{1}{4}\pdist{s_2}+\frac{1}{4}\pdist{s_3}$,
and
$\Theta=\frac{1}{3}\pdist{t_1}+\frac{1}{6}\pdist{t_2}+\frac{1}{2}\pdist{t_3}$,
then $\Delta\lift{\CR}\Theta$ holds because of the decompositions
$\Delta=\frac{1}{3}\pdist{s_1}+\frac{1}{6}\pdist{s_1}+\frac{1}{4}\pdist{s_2}+\frac{1}{4}\pdist{s_3}$
and
$\Theta=\frac{1}{3}\pdist{t_1}+\frac{1}{6}\pdist{t_2}+\frac{1}{4}\pdist{t_3}+\frac{1}{4}\pdist{t_3}$.

From the above definition, the next two properties follow. In fact,
they are sometimes used as alternative methods of lifting relations
(see e.g. \cite{SL94,LS91}).
\begin{proposition}\rm\label{p:lift.char}
\begin{enumerate}
\item Let $\Delta$ and $\Theta$ be distributions over $S$ and $T$,
  respectively. Then
  $\Delta \lift{\aRel} \Theta$ iff there
  exists a weight function $w:S\times T \rightarrow [0,1]$ such that
  \begin{enumerate}
  \item $\forall s\in S: \sum_{t\in T}w(s,t) = \Delta(s)$
  \item $\forall t\in T: \sum_{s\in S}w(s,t) = \Theta(t)$
  \item $\forall (s,t)\in S\times T: w(s,t) > 0 \Rightarrow s\aRel t$.
  \end{enumerate}
\item Let $\Delta,\Theta$ be distributions over  $S$ and
 $\CR$ be an equivalence relation. Then
$\Delta \lift{\aRel} \Theta$ iff $\Delta(C)=\Theta(C)$ for all
equivalence classes $C\in S/\CR$, where $\Delta(C)$ stands for the
accumulated probability $\sum_{s\in C}\Delta(s)$.
\end{enumerate}
\end{proposition}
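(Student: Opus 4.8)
The plan is to prove part 1 first and then obtain part 2 as a specialisation, since the equivalence-class reformulation falls out most cleanly once a weight function is available. The whole point of part 1 is that the decomposition data of Definition~\ref{d:lift} and a weight function encode exactly the same information, so each direction is merely a translation between the two presentations.

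For the forward direction of part 1, I would start from a decomposition $\Delta = \sum_{i\in I} p_i\cdot\pdist{s_i}$ with $s_i \aRel t_i$ and $\Theta = \sum_{i\in I} p_i\cdot\pdist{t_i}$, and define $w(s,t) = \sum_{i \in I,\, s_i = s,\, t_i = t} p_i$, the total weight routed from $s$ to $t$. Regrouping the defining sums of $\Delta$ and $\Theta$ gives conditions (a) and (b), while (c) holds because $w(s,t)>0$ forces some index $i$ with $s_i=s$, $t_i=t$ and $p_i>0$, whence $s\aRel t$. For the converse I would turn the weight function into a decomposition by taking the index set $I = \setof{(s,t)}{w(s,t)>0}$, with $p_{(s,t)} = w(s,t)$, $s_{(s,t)} = s$ and $t_{(s,t)} = t$; conditions (a)--(c) then deliver clauses 1--3 of the definition verbatim.

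For part 2 I would invoke part 1 throughout. In the forward direction, take a witnessing weight function $w$; since $\CR$ is an equivalence, $w(s,t)>0$ entails that $s$ and $t$ share a class, so for any class $C$ every nonzero term of $\sum_{s\in C}\sum_{t} w(s,t)$ already has its target in $C$, and symmetrically. Summing (a) over $s\in C$ and (b) over $t\in C$ thus both reduce to $\sum_{s,t\in C} w(s,t)$, yielding $\Delta(C) = \Theta(C)$. In the reverse direction, assuming $\Delta(C)=\Theta(C)$ for every class, I would set $w(s,t) = \Delta(s)\cdot\Theta(t)/\Delta(C)$ when $s,t$ lie in a common class $C$ with $\Delta(C)>0$ and $w(s,t)=0$ otherwise; the normalisation by $\Delta(C)=\Theta(C)$ makes the row and column sums collapse to $\Delta(s)$ and $\Theta(t)$, (c) is immediate, and part 1 closes the argument.

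The main obstacle, modest as it is, lies in the two reverse directions, where one must exhibit a concrete witness rather than merely reason about one. In part 1 the bookkeeping around the non-unique decomposition---collapsing many indices onto a single pair $(s,t)$---has to be arranged so that no probability mass is lost or double-counted; it is also worth noting that any distribution has countable support, so all the sums remain well-defined even when $S$ or $T$ is infinite. In part 2 the real step is guessing the proportional weight $\Delta(s)\cdot\Theta(t)/\Delta(C)$: once it is written down the three verifications are routine, but they rely essentially on the hypothesis $\Delta(C)=\Theta(C)$ to cancel the denominator.
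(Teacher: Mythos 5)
Your proof is correct and follows essentially the same route as the paper: the same aggregated weight $w(s,t)=\sum\{p_i\mid s_i=s,\,t_i=t\}$ for the forward direction of part~1, the same pair-indexed decomposition $I=\{(s,t)\mid w(s,t)>0\}$ for its converse, and the same proportional witness $\Delta(s)\Theta(t)/\Delta(C)$ for the converse of part~2. The only cosmetic difference is that you route part~2 through the weight-function characterisation of part~1, whereas the paper argues directly on decompositions; this changes nothing of substance.
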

\begin{proof}
See Proposition 2.3 in \cite{DD09}.
\hfill\qed
\end{proof}

\noindent In a similar way, following \cite{DGHMZ07}, we can lift a
relation $\CR\subseteq S\times\dist{T}$ to a relation
$\mathord{\lift{\CR}}\subseteq \dist{S}\times\dist{T}$, by letting
$\Delta \lift{\CR} \Theta$ whenever
\begin{enumerate}\vspace{-1ex}
\item
$\Delta = \sum_{i\in I}{p_i \cdot \pdist{s_i}}$, where
  $I$ is a countable index set and  $\sum_{i\in I}{p_i} = 1$

\item
for each $i \in I$ there is a distribution $\Theta_i$ such that $s_i
\aRel \Theta_i$

\item
$\Theta = \sum_{i \in I} {p_i \cdot \Theta_i}$.
\end{enumerate}

The above lifting constructions satisfy the following two useful
properties, whose proofs are easy, so we omit them.
\begin{proposition}\rm\label{prop:lifting}
Suppose $\mathord{\aRel}\subseteq S\times S$ or $S \mathop{\times}
 \dist{S}$ and
 $\sum_{i\in I} p_i = 1$. Then
\begin{enumerate}\vspace{-1ex}
\item
$\Delta_i \lift{\aRel} \Theta_i$ for all $i\inp I$ implies $(\sum_{i\in I}{p_i \cdot
\Delta_i}) \lift{\aRel} (\sum_{i\in I}{p_i \cdot \Theta_i})$.

\item
If $(\sum_{i\in I}{p_i \cdot \Delta_i}) \lift{\aRel} \Theta$ then
$\Theta = \sum_{i\in I}{p_i \cdot \Theta_i}$  for some set of
distributions $\Theta_i$ such that $\Delta_i \lift{\aRel} \Theta_i$ for all $i\inp I$.
\hfill\qed
\end{enumerate}
\end{proposition}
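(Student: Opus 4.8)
The plan is to prove both parts directly from Definition~\ref{d:lift} and its variant for relations $\CR\subseteq S\times\dist{S}$, treating the two cases uniformly. In either case $\Delta\lift{\aRel}\Theta$ means there is a countable decomposition $\Delta=\sum_{k}r_k\cdot\pdist{u_k}$ with $\sum_k r_k=1$ and $r_k>0$, together with a choice, for each $k$, of a witness $\Lambda_k\in\dist{S}$ satisfying $u_k\aRel\Lambda_k$ (reading $\Lambda_k=\pdist{t_k}$ when $\aRel$ relates states), such that $\Theta=\sum_k r_k\cdot\Lambda_k$. Since the argument never inspects whether the $\Lambda_k$ are point distributions, one proof covers $\aRel\subseteq S\times S$ and $\aRel\subseteq S\times\dist{S}$ at once. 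I also note that $\sum_{i\in I}p_i=1$ forces all but countably many $p_i$ to vanish, so $I$ may be taken countable; in part 2 I shall further assume every $p_i>0$, since a zero-weight term is not part of the convex combination (and a state in its support need not be related to anything at all).

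For part 1, I would first expand each hypothesis $\Delta_i\lift{\aRel}\Theta_i$ into a decomposition $\Delta_i=\sum_{k\in K_i}r_{i,k}\cdot\pdist{u_{i,k}}$ with witnesses $\Lambda_{i,k}$ and $\Theta_i=\sum_{k\in K_i}r_{i,k}\cdot\Lambda_{i,k}$. The index set $\{(i,k)\mid i\in I,\ k\in K_i\}$ is a countable union of countable sets, hence countable. Reindexing by it, $\sum_{i\in I}p_i\cdot\Delta_i=\sum_{i,k}(p_i r_{i,k})\cdot\pdist{u_{i,k}}$ is a legitimate decomposition whose weights sum to $\sum_i p_i\sum_k r_{i,k}=\sum_i p_i=1$, each $u_{i,k}$ carries the inherited witness $\Lambda_{i,k}$, and $\sum_{i,k}(p_i r_{i,k})\cdot\Lambda_{i,k}=\sum_i p_i\cdot\Theta_i$. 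This is exactly the witnessing decomposition needed to conclude $(\sum_i p_i\cdot\Delta_i)\lift{\aRel}(\sum_i p_i\cdot\Theta_i)$; the direction is essentially bookkeeping.

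Part 2 is where the real work lies. Writing $\Delta=\sum_{i\in I}p_i\cdot\Delta_i$, the hypothesis supplies one decomposition $\Delta=\sum_{k\in K}r_k\cdot\pdist{u_k}$ with witnesses $\Lambda_k$ and $\Theta=\sum_k r_k\cdot\Lambda_k$, but this decomposition records nothing about how $\Delta$ was assembled from the $\Delta_i$, so I must redistribute its mass among the components. For each $i$ and each $k$ I would set $r_{i,k}=r_k\cdot\frac{p_i\Delta_i(u_k)}{\Delta(u_k)}$, which is well defined because $r_k>0$ forces $\Delta(u_k)\ge r_k>0$. The identity $\sum_{i}p_i\Delta_i(u_k)=\Delta(u_k)$ gives $\sum_i r_{i,k}=r_k$, and grouping the indices $k$ by their target state $u_k$ and using $\Delta(s)=\sum_{k:u_k=s}r_k$ shows that $\sum_k \frac{r_{i,k}}{p_i}\cdot\pdist{u_k}=\Delta_i$ with weights summing to $1$. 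Defining $\Theta_i=\sum_k\frac{r_{i,k}}{p_i}\cdot\Lambda_k$ then exhibits $\Delta_i\lift{\aRel}\Theta_i$ (the witnesses $u_k\aRel\Lambda_k$ are carried over unchanged), and finally $\sum_i p_i\cdot\Theta_i=\sum_k\bigl(\sum_i r_{i,k}\bigr)\cdot\Lambda_k=\sum_k r_k\cdot\Lambda_k=\Theta$, as required.

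The main obstacle is thus the construction in part 2: devising the proportional splitting factor $\frac{p_i\Delta_i(u_k)}{\Delta(u_k)}$ and verifying that the single quantity $r_{i,k}$ simultaneously reconstitutes each $\Delta_i$, recombines to $\Theta$, and keeps the index sets countable, all in the presence of the non-uniqueness of decompositions. Once the factor is in place the verifications are just the routine summations sketched above, and the uniform treatment of the two liftings requires no extra effort.
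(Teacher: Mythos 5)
The paper gives no proof of this proposition at all---it is dismissed with ``whose proofs are easy, so we omit them''---so there is nothing to compare against except correctness, and your argument is correct. Part~1 is indeed pure bookkeeping, and your part~2 supplies the one genuinely needed idea: the proportional splitting $r_{i,k}=r_k\cdot p_i\Delta_i(u_k)/\Delta(u_k)$, which is well defined since $\Delta(u_k)\ge r_k>0$, reconstitutes each $\Delta_i$ because $\sum_{k:u_k=s}r_k=\Delta(s)$, and recombines to $\Theta$ because $\sum_i r_{i,k}=r_k$. Your uniform treatment of the two liftings (witnesses $\Lambda_k$ that may or may not be point distributions) is legitimate, since the verification never uses that they are points. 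Your caveat that part~2 requires every $p_i>0$ is also a genuine observation about the statement as printed: if $p_i=0$ and no state in $\support{\Delta_i}$ is related to anything, no $\Theta_i$ with $\Delta_i\lift{\aRel}\Theta_i$ exists, so the proposition must be read with that implicit restriction (or with the quantifier ranging only over indices of positive weight).
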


\noindent We write $s \ar{\hat{\tau}} \Delta$ if either $s \ar{\tau}
\Delta$ or $\Delta = \pdist{s}$, and  $s \ar{\hat{a}} \Delta$ iff
$s\ar{a}\Delta$ for $a\in\Act$. For any $a\in\Act_\tau$, we know
that $\mathord{\ar{\hat{a}}}\subseteq S\times\dist{S}$, so we can
lift it to be a transition relation between distributions. With a
slight abuse of notation we simply write $\Delta\ar{\hat{a}}\Theta$
for $\Delta\lift{(\ar{\hat{a}})}\Theta$. Then we define weak
transitions $\darhat{{a}}$ by letting $\darhat{{\tau}}$ be the
reflexive and transitive closure of $\ar{\hat{\tau}}$ and writing
$\Delta\darhat{{a}} \Theta$ for $a \inp\Act$ whenever $\Delta
\darhat{{\tau}} \ar{\hat{a}} \darhat{{\tau}} \Theta$.

\begin{definition}\rm
A \emph{divergence} is a sequence of states $s_i$ and distributions
$\Delta_i$ with $s_i\ar{\tau} \Delta_{i}$ and $s_{i+1}\inp\support{\Delta_i}$
for $i\geq 0$.
\end{definition}
The above definition of $\darhat{ a}$ is sensible only in the absence
of divergence. In general, one would need a more complicated notion of
$\darhat{ a}$, such as proposed in \cite{DGHM09}.
Therefore, from here on we restrict attention to divergence-free pLTSs.

\begin{definition}\rm\label{d:sbisi}
A relation $\mathord{\CR}\subseteq S \times S$ is a {\em strong
probabilistic simulation} if $s\ \CR\ t$ and $a \inp \Act_\tau$
implies
\begin{itemize}\vspace{-1ex}
\item if $s\ar{a}\Delta$ then there exists some $\Theta$ such that
  $\pdist{t}\ar{a}\Theta$ and $\Delta \lift{\aRel} \Theta$
\end{itemize}
If both $\CR$ and $\CR^{-1}$ are strong probabilistic simulations,
then $\CR$ is a {\em strong probabilistic bisimulation}.
A state $s$ is related to another state $t$ via \emph{strong
  probabilistic similarity} (resp.\ \emph{bisimilarity}), denoted
 $s \SI t$ (resp. $s \BISI t$), if there
exists a strong probabilistic simulation (resp. bisimulation) $\CR$
such that $s\
  \CR\ t$. \emph{Weak probabilistic similarity} ($\WSI$) and
  \emph{weak probabilistic bisimilarity} ($\WBISI$) are
  defined in the same manner just by using  $\pdist{t}\darhat{{a}}\Theta$
  in place of $\pdist{t}\ar{a}\Theta$.
\end{definition}

\noindent All four (bi)simulations above stem from
\cite{SL94,Seg95}. There they were proposed as improvements over the
strong bisimulation of \cite{HJ90} and the strong simulation of
\cite{JL91}, both of which can be defined as the strong
\emph{probabilistic} (bi)simulation above, but using  $t\ar{a}\Theta$
  in place of $\pdist{t}\ar{a}\Theta$.
Other definitions of simulation have also
appeared in the literature. Here we consider two typical ones:
forward simulation \cite{Seg95} and failure simulation
\cite{DGHM08}.

\begin{definition}\rm\label{d:failsim}
A relation $\mathord{\aRel}\subseteq S\times\dist{S}$ is a {\em failure
  simulation} if $s\aRel\Theta$ implies
\begin{enumerate}\vspace{-1ex}
\item if $s\ar{a}\Delta$ with $a\inp\Act_\tau$ then $\exists\Theta'$ such that
  $\Theta\darhat{{a}}\Theta'$ and $\Delta\lift{\aRel}\Theta'$;\vspace{2pt}
\item if $s\nar{A}$ with $A\subseteq\Act$ then $\exists\Theta'$ such
  that $\Theta\darhat{{\tau}}\Theta'$ and $\Theta'\nar{A}$.
\vspace{-1ex}\end{enumerate}
We write $s\failsim \Theta$ if there is some failure simulation
$\aRel$ such that $s\aRel \Theta$.
\end{definition}
Similarly, we define a forward
simulation and $s\forsim\Theta$ by dropping the second clause in
Definition~\ref{d:failsim}.

\begin{lemma}\label{transfer property}
Let $\mathord{\aRel} \in \{\WBISI, \WSI, \forsim, \failsim\}$.
\begin{enumerate}\vspace{-1ex}
\item
If $\Delta \lift{\aRel} \Theta$ and $\Delta \ar{a} \Delta'$ then
$\exists\Theta'$ such that $\Theta\darhat{{a}}\Theta'$ and
$\Delta\lift{\aRel}\Theta'$.
\item
If $\Delta \lift{\aRel} \Theta$ and $\Delta \darhat{a} \Delta'$ then
$\exists\Theta'$ such that $\Theta\darhat{{a}}\Theta'$ and
$\Delta\lift{\aRel}\Theta'$.
\vspace{-1ex}\end{enumerate}
If $\mathord{\aRel} \in \{\BISI, \SI\}$, the first result applies as
well, but with $\ar{a}$ instead of $\darhat{a}$.
\end{lemma}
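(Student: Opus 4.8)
The plan is to establish statement~1 by a \emph{decompose--match--recombine} argument and then to bootstrap statement~2 from it. For statement~1, assume $\Delta\lift{\aRel}\Theta$ and $\Delta\ar{a}\Delta'$, reading the latter as the lifted transition $\Delta\lift{(\ar{a})}\Delta'$. First I would unfold $\Delta\lift{\aRel}\Theta$ via Definition~\ref{d:lift}: fix a decomposition $\Delta=\sum_{i\in I}p_i\cdot\pdist{s_i}$ and distributions $\Theta_i$ with $\Theta=\sum_{i\in I}p_i\cdot\Theta_i$, where each $\Theta_i$ witnesses the relation at $s_i$ (namely $\Theta_i=\pdist{t_i}$ with $s_i\aRel t_i$ in the cases $\WBISI,\WSI,\BISI,\SI$, and $s_i\aRel\Theta_i$ in the cases $\forsim,\failsim$). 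Since $S$ is finite, this decomposition may be taken with $I$ finite: for the $S\times S$ relations this is immediate from the weight-function characterisation of Proposition~\ref{p:lift.char}(1), and for the others from finiteness of $\support{\Delta}$. Next, applying Proposition~\ref{prop:lifting}(2) to $\Delta\lift{(\ar{a})}\Delta'$ \emph{along this same decomposition of $\Delta$} yields $\Delta'=\sum_{i\in I}p_i\cdot\Delta'_i$ with $\pdist{s_i}\lift{(\ar{a})}\Delta'_i$, i.e.\ $s_i\ar{a}\Delta'_i$, for every $i$.

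Now for each $i$ I would invoke the appropriate simulation clause. As $s_i$ is related (to $t_i$, or to $\Theta_i$) and $s_i\ar{a}\Delta'_i$, the defining clause of weak (bi)simulation (Definition~\ref{d:sbisi}) or clause~1 of forward/failure simulation (Definition~\ref{d:failsim}) supplies $\Theta'_i$ with $\Theta_i\darhat{a}\Theta'_i$ and $\Delta'_i\lift{\aRel}\Theta'_i$. Setting $\Theta'=\sum_{i\in I}p_i\cdot\Theta'_i$, the conclusion $\Delta'\lift{\aRel}\Theta'$ is then immediate from Proposition~\ref{prop:lifting}(1) applied to $\Delta'=\sum_i p_i\cdot\Delta'_i$ and $\Theta'=\sum_i p_i\cdot\Theta'_i$. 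For the strong relations $\BISI,\SI$ the clause instead returns a single strong step $\Theta_i\ar{a}\Theta'_i$, and these recombine to $\Theta\ar{a}\Theta'$ directly by Proposition~\ref{prop:lifting}(1), giving the stated strong variant.

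The one step that needs genuine care --- and the main obstacle --- is the remaining claim $\Theta\darhat{a}\Theta'$, i.e.\ that $\darhat{a}$ is closed under finite convex combinations. The key observation is that $\ar{\hat{\tau}}$, read as the lifted relation on distributions, is \emph{reflexive}, since $s\ar{\hat{\tau}}\pdist{s}$ holds for every $s$; hence any $\darhat{\tau}$-derivation can be padded with reflexive steps to an arbitrary length. Given the finitely many derivations $\Theta_i\darhat{\tau}\Theta'_i$, I would pad them to a common length $n$, writing $\Xi_i^{(0)},\dots,\Xi_i^{(n)}$ for the padded derivation of index $i$, with $\Xi_i^{(0)}=\Theta_i$ and $\Xi_i^{(n)}=\Theta'_i$. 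Then step $k$ of the combined derivation, $\sum_i p_i\cdot\Xi_i^{(k)}\ar{\hat{\tau}}\sum_i p_i\cdot\Xi_i^{(k+1)}$, is an instance of Proposition~\ref{prop:lifting}(1) for $\lift{(\ar{\hat{\tau}})}$; concatenating the $n$ steps gives $\Theta\darhat{\tau}\Theta'$. This is exactly where finiteness of $I$ is used, as the common length $n=\max_i n_i$ must exist. The case $a\inp\Act$ reduces to the $\tau$-case by splitting each $\Theta_i\darhat{a}\Theta'_i$ as $\Theta_i\darhat{\tau}\ar{\hat{a}}\darhat{\tau}\Theta'_i$ and combining the three segments, the middle one again by Proposition~\ref{prop:lifting}(1).

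For statement~2 I would first specialise statement~1 to single $\ar{\hat{\tau}}$-steps (treating the reflexive alternative $\Delta'_i=\pdist{s_i}$ by taking $\Theta'_i=\Theta_i$), obtaining: $\Delta\lift{\aRel}\Theta$ and $\Delta\ar{\hat{\tau}}\Delta'$ imply some $\Theta'$ with $\Theta\darhat{\tau}\Theta'$ and $\Delta'\lift{\aRel}\Theta'$. Iterating along a $\darhat{\tau}$-chain, together with transitivity of $\darhat{\tau}$, extends this to the hypothesis $\Delta\darhat{\tau}\Delta'$. The general case $\Delta\darhat{a}\Delta'$ with $a\inp\Act$ then follows by writing the weak transition as $\Delta\darhat{\tau}\ar{\hat{a}}\darhat{\tau}\Delta'$ and applying, in turn, the $\darhat{\tau}$-version, statement~1 for the intermediate $\ar{\hat{a}}$-step, and the $\darhat{\tau}$-version again, concatenating the three matching weak transitions into a single $\Theta\darhat{a}\Theta'$.
\hfill\qed
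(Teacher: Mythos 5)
Your proposal is essentially the paper's own decompose--match--recombine argument, but with the two decompositions taken in the opposite order: you unfold $\Delta \lift{\aRel} \Theta$ first and then refine $\Delta'$ along that decomposition via Proposition~\ref{prop:lifting}(2), whereas the paper unfolds $\Delta \lift{(\ar{a})} \Delta'$ first and then refines $\Theta$. This reversal has one consequence you gloss over: Proposition~\ref{prop:lifting}(2) gives you $\pdist{s_i}\lift{(\ar{a})}\Delta'_i$, which is \emph{not} the same as $s_i\ar{a}\Delta'_i$ --- it only says $\Delta'_i=\sum_j q_{ij}\cdot\Delta''_{ij}$ with $s_i\ar{a}\Delta''_{ij}$ for each $j$, since a state may combine several distinct $a$-transitions convexly. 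The simulation clauses of Definitions~\ref{d:sbisi} and~\ref{d:failsim} take a single transition $s_i\ar{a}\Delta''_{ij}$ as hypothesis, so you need one further round of the same decompose--match--recombine pattern (apply the clause to each $\Delta''_{ij}$, recombine the answers by convexity of $\darhat{a}$ and Proposition~\ref{prop:lifting}(1)) before setting $\Theta'_i$. This is a repairable slip, not a structural flaw; the paper's ordering avoids it because decomposing the lifted transition first directly yields genuine single steps $s_i\ar{a}\Delta'_i$, at the price of instead having to decompose $\Theta_i$ (which is where its extra $\Theta_{ij}$ step for $\forsim$ and $\failsim$ comes from) --- the two routes are mirror images. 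On the plus side, your explicit padding argument for the convexity of $\darhat{a}$ (using reflexivity of $\ar{\hat{\tau}}$ to equalise derivation lengths) is a genuine addition: the paper simply cites Lemma 6.6 of an external reference for this fact, and your version makes the proof self-contained, though your claim that the index set can always be taken finite deserves a little more care for $\forsim$ and $\failsim$, where the witnessing $\Theta_i$ are arbitrary distributions and terms with equal $s_i$ cannot simply be merged (the relation need not be closed under convex combinations in its second argument).
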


\begin{proof}
We start with the cases that $\mathord{\aRel} \mathbin{=} \mathord{\WBISI}$ or
$\mathord{\aRel} \mathbin{=} \mathord{\WSI}$.
Let $\Delta\lift{\aRel} \Theta$ and \plat{$\Delta\lift{\ar{a}}\Delta'$}.
The latter means that $\Delta = \sum_{i\in I} p_i \cdot \pdist{s_i}$,
$\Delta' = \sum_{i\in I} p_i \cdot \Delta'_i$ and $s_i \ar{a}\Delta'_i$
for $i\in I$.  Since \plat{$\Delta\lift{\aRel} \Theta$}, we have
$\Theta = \sum_{i\in I} p_i \cdot \Theta_i$ with
\plat{$\pdist{s_i}\lift{\aRel} \Theta_i$}, using Proposition~\ref{prop:lifting}(2).
Therefore, for each $i\inp I$ and $t\in\support{\Theta_i}$, we have
$s_i \aRel t$, and hence there is some $\Theta'_t$ with
$\pdist{t}\darhat{{a}}\Theta'_t$ and $\Delta'_i\lift{\aRel} \Theta'_t$.
Let $\Theta'_i := \sum_t\Theta_i(t)\cdot \Theta'_t$.  Then
$\Theta_i\darhat{a}\Theta'_i$ and \plat{$\Delta'_i\lift{\aRel} \Theta'_i$},
using Lemma 6.6 from \cite{DGHMZ07}, which is
Proposition~\ref{prop:lifting}(1) but with $\darhat{a}$ instead of $\lift{\aRel}$.
Let $\Theta' := \sum_{i\in I}p_i \cdot\Theta'_i$.
Then $\Theta\darhat{a}\Theta'$ and \plat{$\Delta'\lift{\aRel} \Theta'$},
again by Lemma 6.6 of \cite{DGHMZ07}.

The first statement, and its proof, also hold with $\ar{\hat{\tau}}$
instead of $\ar{a}$. From this, the second statement follows by transitivity.

The cases that $\mathord{\aRel} = \mathord{\forsim}$ or
$\mathord{\aRel} = \mathord{\failsim}$ proceed likewise, except that
the two sentences starting with ``Therefore'' are replaced by:\\
Therefore,
for each $i\in I$ there are some index set $J_i$ and probabilities $p_{ij}$ such that $\sum_{j\in J_i}p_{ij}=1$ and
 $\Theta_i \,=\, \sum_{j\in J_i}p_{ij}\cdot\Theta_{ij}$ with $s_i \aRel
\Theta_{ij}$ for all $j\inp J_i$, and hence
there are $\Theta'_{ij}$ with \plat{$\Theta_{ij}\darhhat{{a}}\Theta'_{ij}$}
and \plat{$\Delta'_i\lift{\aRel} \Theta'_{ij}$}.
Let $\Theta'_i := \sum_j p_{ij}\cdot \Theta'_{ij}$.

The proof for $\mathord{\aRel} \mathbin{=} \mathord{\BISI}$ or
$\mathord{\aRel} \mathbin{=} \mathord{\SI}$ goes as for
$\mathord{\aRel} \mathbin{=} \mathord{\WBISI}$, with $\ar{a}$
replacing $\darhat{a}$.
\qed
\end{proof}

\section{The Probabilistic Modal mu-Calculus}\label{s:pmu}
Let $\Var$ be a countable set of variables.
We define a class $\cL^{\rm raw}$ of modal formulae
by the following grammar:
\[ \phi := 
\bigwedge_{i\in I}\phi_i \mid \bigvee_{i\in I}\phi_i \mid \neg \phi \mid
\diam{a}\phi \mid \boxm{a}\phi \mid
\bigoplus_{i\in I} \phi_i \mid 
\bigoplus_{i\in I}p_i\cdot\phi_i \mid \,\downarrow\!\!\phi \mid X \mid
\mu X.\phi \mid \nu X.\phi \]
where $I$ is an index set, $a\in\Act_\tau$ and
$\sum_{i\in I}p_i=1$. The \emph{probabilistic modal mu-calculus} (pMu)
is given by the subclass $\cL$, obtained by imposing the syntactic
condition that in $\mu X.\phi$ and $\nu X.\phi$ the variable $X$ may
occur in $\phi$ only within the scope of an even number of negations.
The above syntax is obtained by adding a variant of the probabilistic construct
$\bigoplus_{i\in I}p_i\cdot\phi_i$, introduced in \cite{DGHM08} in the
context of a less expressive logic without fixpoint operators, as well
as the novel modalities $\bigoplus_{i\in I}\phi_i$ and $\downarrow\!\!\phi$,
to the syntax of the non-probabilistic mu-calculus \cite{Koz83}.  As
usual, one has $\bigwedge_{i\in\emptyset}\phi_i=\true$
and $\bigvee_{i\in\emptyset}\phi_i=\false$.

The two fixpoint operators $\mu X$ and $\nu X$ bind the respective
variable $X$. We apply the usual terminology of free and bound
variables in a formula and write $\fv(\phi)$ for the set of free
variables in $\phi$. A formula $\phi$ is \emph{closed} if
$\fv(\phi)=\emptyset$.

For any set $\Omega$, write $\pow{\Omega}$ for the power set of $\Omega$.
We use {\em environments}, which bind free variables to
sets of distributions, in order to give semantics to formulae. Let
\[\Env=\setof{\rho}{\rho:\Var\rightarrow\pow{\dist{S}}}\] be the set of
all environments and ranged over by $\rho$. For a set
$V\subseteq\dist{S}$ and a variable $X\in\Var$, we write
$\rho[X\mapsto V]$ for the environment that maps $X$ to $V$ and $Y$ to
$\rho(Y)$ for all $Y\not=X$.

The semantics of a formula $\phi$ in an environment $\rho$ is given as
the set of distributions $\Op{\phi}_\rho$ satisfying it. This leads to
a semantic functional $\Op{\ }:\cL \rightarrow \Env \rightarrow \pow{\dist{S}}$
defined inductively in Table \ref{f:semantics}.
As the meaning of a closed formula $\phi$ does not depend on the
environment, one writes $\Op{\phi}$ for $\Op{\phi}_\rho$ where $\rho$
is an arbitrary environment.
In that case one also writes $\Delta\models\phi$ for $\Delta\in\Op{\phi}$.

\begin{table}
\begin{center}
$\begin{array}{|rcl@{\qquad\qquad\mbox{so}\qquad\qquad}rcl|}
\hline
&&\multicolumn{4}{l|}{~}\\[-8pt]
\Op{\bigwedge_{i\in I}\phi_i}_\rho & = &
\bigcap_{i\in I}\Op{\phi_i}_\rho &
\Op{\true}_\rho & = & \dist{S} \\
\Op{\bigvee_{i\in I}\phi_i}_\rho & = & \bigcup_{i\in
  I}\Op{\phi_i}_\rho &
\Op{\false}_\rho & = & \emptyset \\
\Op{\neg \phi}_\rho & = & \multicolumn{4}{l|}{\dist{S}\setminus\Op{\phi}_\rho} \\
\Op{\diam{a}\phi}_\rho & = & \multicolumn{4}{l|}{ \setof{\Delta\in\dist{S}}{\exists
  \Delta':\Delta \ar{a} \Delta'\ \wedge\ \Delta'\in\Op{\phi}_\rho} }\\
\Op{\boxm{a}\phi}_\rho & = &  \multicolumn{4}{l|}{\setof{\Delta\in\dist{S}}{\forall
  \Delta':\Delta \ar{a} \Delta'\ \Rightarrow\ \Delta'\in\Op{\phi}_\rho}}\\
\Op{\bigoplus_{i\in I}\phi_i}_\rho & = & \multicolumn{4}{l|}{
\{\, \Delta\in\dist{S} \,\mid\, \begin{array}[t]{@{}l@{}}
    \Delta= \sum_{i\in I}p_i\cdot\Delta_i \mbox{ for some $p_i$ with $\sum_{I\in I}p_i=1$}\\
    \wedge\ \forall i\inp I:~ \Delta_i\in\Op{\phi_i}_\rho \,\}\end{array}} \\
\Op{\bigoplus_{i\in I}p_i\cdot\phi_i}_\rho & = & \multicolumn{4}{l|}{
\setof{\Delta\in\dist{S}}{\Delta= \sum_{i\in I}p_i\cdot\Delta_i
  \ \wedge\ \forall i\in I:~ \Delta_i\in\Op{\phi_i}_\rho}} \\
\Op{\downarrow\!\!\phi}_\rho & = & \multicolumn{4}{l|}{
\setof{\Delta\in\dist{S}}{\forall s\in\support{\Delta}: \pdist{s}\in\Op{\phi}_\rho}} \\
\Op{X}_\rho & = & \multicolumn{4}{l|}{ \rho(X) } \\
\Op{\mu X.\phi}_\rho & = & \multicolumn{4}{l|}{
\bigcap\setof{V\subseteq\dist{S}}{\Op{\phi}_{\rho[X\mapsto V]} \subseteq V}}\\
\Op{\nu X.\phi}_\rho & = & \multicolumn{4}{l|}{
\bigcup\setof{V\subseteq\dist{S}}{\Op{\phi}_{\rho[X\mapsto V]} \supseteq V}}\\[2pt]
\hline
&&\multicolumn{4}{l|}{~}\\[-8pt]
\Op{\diam{a}\phi}_\rho & = & \multicolumn{4}{l|}{\setof{\Delta\in\dist{S}}{\exists
  \Delta':\Delta \darhat{{a}} \Delta'\ \wedge\ \Delta'\in\Op{\phi}_\rho}}\\
\Op{\boxm{a}\phi}_\rho & = & \multicolumn{4}{l|}{\setof{\Delta\in\dist{S}}{\forall
  \Delta':\Delta \darhat{{a}} \Delta'\ \Rightarrow\ \Delta'\in\Op{\phi}_\rho}}\\
\hline
\end{array}$
\end{center}
\vspace{-1ex}
\caption{Strong and weak semantics of the probabilistic modal mu-calculus}\label{f:semantics}
\vspace{-1ex}
\end{table}

Following \cite{Koz83,RS97} we give a \emph{strong} and a \emph{weak}
semantics of the probabilistic modal mu-calculus. Both are
the same as those of the modal mu-calculus \cite{Koz83,RS97} except
that distributions of states are taking the roles of states. The
power set of $\dist{S}$, $\pow{\dist{S}}$, may be viewed as the
complete lattice $(\pow{\dist{S}},\linebreak[1]\dist{S},\emptyset,
\subseteq,\cup,\cap)$. Intuitively, we identify a formula with the
set of distributions that make it true. For example, $\true$ holds
for all distributions and dually $\false$ holds for no distribution.
Conjunction and disjunction are interpreted by intersection and
union of sets, and negation by complement. The formula $\diam{a}\phi$
holds for a distribution $\Delta$ if there is a distribution $\Delta'$
that can be reached after an $a$-transition and that satisfies
$\phi$. Dually, $\boxm{a}\phi$ holds for $\Delta$ if all distributions
reachable from $\Delta$ by an $a$-transition satisfy $\phi$. The
formulas $\bigoplus_{i\in I}\phi_i$ and $\bigoplus_{i\in
  I}p_i\cdot\phi_i$ hold for $\Delta$ if the distribution can be
decomposed into a convex combination of some distributions $\Delta_i$
and each of them satisfies the corresponding sub-formula $\phi_i$; the
first of these modalities allows \emph{any} convex combination,
whereas the second one specifies a particular one. The formula
$\downarrow\!\!\phi$ holds for $\Delta$ if all states in its support satisfy $\phi$.
The characterisation of the {\em least fixpoint formula} $\mu X.\phi$
and the {\em greatest fixpoint formula} $\nu X.\phi$ follows from the
well-known Knaster-Tarski fixpoint theorem \cite{Tar55}.

The weak semantics reflects the unobservable nature of internal
actions; it differs from the strong semantics only in the use of the
relations $\darhat{{a}}$ instead of $\ar{a}$ in the interpretation of
the modalities $\diam{a}$ and $\boxm{a}$.

Note that there is some redundancy in the syntax of pMu: each of the
constructs $\bigwedge_{i\in I}$, $\diam{a}$ and $\mu$ can be expressed
in terms of its dual $\bigvee_{i\in I}$, $\boxm{a}$ and $\nu$ with the
aid of negation. However, negation may not be redundant, as the dual
of $\bigoplus_{i\in I}p_i\cdot\phi_i$ does not appear to be expressible without
using negation; moreover this dual lacks the intuitive appeal for
introducing it as a new primitive.

We shall consider (closed) {\em equation systems} of formulae of the
form
\[\begin{array}{rcl}
E: X_1 & = & \phi_1 \\
       & \vdots & \\
   X_n & = & \phi_n
\end{array}\]
where $X_1,...,X_n$ are mutually distinct variables and
$\phi_1,...,\phi_n$ are formulae having at most $X_1,...,X_n$ as
free variables. Moreover, each occurrence of

Here $E$ can be viewed as a function
$E:\Var\rightarrow\cL$ defined by $E(X_i)=\phi_i$ for $i=1,...,n$
and $E(Y)=Y$ for other variables $Y\in\Var$.

An environment $\rho$ is a {\em solution} of an equation system $E$ if
its assignment to $X_i$ coincides with the interpretation of $\phi_i$
in the environment, that is,
\[\forall i:\rho(X_i) = \Op{\phi_i}_\rho.\]
The existence of solutions
for an equation system can be seen from the following arguments. The
set $\Env$, which includes all candidates for solutions, together with
the partial order $\sqsubseteq$ defined by
\[\rho\sqsubseteq\rho'\ \mbox{\rm  iff}\ \forall
X\in\Var:\rho(X)\subseteq\rho'(X)\] forms a complete lattice. The
{\em equation functional} $\cF_E:\Env\rightarrow \Env$ given in the
notation of the $\lambda$-calculus by \[\cF_E:=\lambda\rho.\lambda
X.\Op{E(X)}_\rho\] is monotonic, which can be shown by induction on
the structure of $E(X)$. Thus, the Knaster-Tarski fixpoint theorem
guarantees existence of solutions, and the greatest solution
\begin{equation}\label{e:gfp}
\nu_E:= \bigsqcup \setof{\rho}{\rho\sqsubseteq\cF_E(\rho)}
\end{equation}
is the supremum of the set of all post-fixpoints of $\cF_E$.

An expression $\nu_E(X)$, with $X$ one of the variables used in $E$,
denotes a set of distributions. Below we will use such expressions
as if they were valid syntax in our probabilistic mu-calculus, with
$\Op{\nu_E(X)}_\rho := \nu_E(X)$. This amounts to extending the
greatest fixpoint operator $\nu$ to apply to finite sets of fixpoint
equations, instead of single equations; the expression $\nu X.\phi$
amounts to the special case $\nu_E(X)$ in which $E$ consists of the
single equation $X = \phi$.

\begin{table}[t]
\begin{itemize}
\item Rule 1: $E \rightarrow F$
\item Rule 2: $E \rightarrow G$
\item Rule 3: $E \rightarrow H$ if $X_n\not\in\fv(\phi_1,...,\phi_n)$
\end{itemize}
\framebox[12.2cm][c]{ $\begin{array}{@{}r@{}rclr@{}rclr@{}rclr@{}rcl@{}}
 E:& X_1 & = & \phi_1 &
 F:& X_1 & = & \phi_1 \qquad & G:& X_1 & = &
       \phi_1[\phi_n/X_n] \qquad & H:& X_1 & = & \phi_1 \\
 &  & \vdots &  &
       & & \vdots & & &  & \vdots & && & \vdots & \\
 & X_{n-1} & = & \phi_{n-1} &
    & X_{n-1} & = & \phi_{n-1}& & X_{n-1} & = &
       \phi_{n-1}[\phi_n/X_n] & & X_{n-1} & = & \phi_{n-1} \\
 & X_n & = & \phi_n &
    & X_n & = & \nu X_n.\phi_n & &  X_n & = & \phi_n & &
       &
\end{array}$}
\vspace{1ex}
\caption{Transformation rules}\label{f:rules}
\vspace{-1ex}
\end{table}

The use of expressions $\nu_E(X)$ is justified because they can be
seen as syntactic sugar for authentic pMu expressions. As explained
in \cite{Mul98}, the three transformation rules in
Table~\ref{f:rules} can be used to obtain from an equation system
$E$ a pMu formula whose interpretation coincides with the
interpretation of $X_1$ in the greatest solution of $E$.
\begin{theorem}\label{t:cha.formula}
Given a finite equation system $E$ that uses the variable $X$, there
is a pMu formula $\phi$ such that $\nu_E(X)=\Op{\phi}$.  \hfill\qed
\end{theorem}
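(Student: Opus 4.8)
The plan is to prove the theorem by induction on the number $n$ of equations in $E$, using the three transformation rules of Table~\ref{f:rules} as the engine of the induction, and showing at each step that the rules preserve the greatest solution. The base case $n=1$ is immediate: a single equation $X = \phi$ has greatest solution $\nu_E(X) = \Op{\nu X.\phi}$ directly from the semantics of the $\nu$ operator in Table~\ref{f:semantics}, so $\phi' := \nu X.\phi$ is the required formula.

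For the inductive step, given an equation system $E$ with $n$ equations $X_1 = \phi_1, \ldots, X_n = \phi_n$, the idea is to eliminate the last variable $X_n$ and thereby reduce to a system of $n-1$ equations, to which the induction hypothesis applies. First I would apply Rule~1 (yielding $F$), which replaces the last equation $X_n = \phi_n$ by $X_n = \nu X_n.\phi_n$, that is, it closes off the self-dependence of $X_n$ by taking a greatest fixpoint in its own equation. The central lemma here is that $\nu_E = \nu_F$, i.e.\ the greatest solutions of $E$ and $F$ coincide; this is the Beki\v{c}-style principle that one may solve for a single variable locally. Next I would apply Rule~2 (yielding $G$), which substitutes the now self-contained right-hand side $\phi_n$ (really $\nu X_n.\phi_n$) for every free occurrence of $X_n$ in $\phi_1, \ldots, \phi_{n-1}$, so that $X_n$ no longer appears free in the first $n-1$ equations. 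Again one must check that this substitution preserves the greatest solution on the variables $X_1,\ldots,X_{n-1}$. Finally Rule~3 (yielding $H$) simply discards the last equation, which is legitimate precisely because after Rule~2 the variable $X_n$ does not occur free in $\phi_1,\ldots,\phi_{n-1}$ (the side condition $X_n\notin\fv(\phi_1,\ldots,\phi_n)$ of Rule~3). The resulting system $H$ has $n-1$ equations and agrees with $E$ on $\nu_E(X_1)$, so by induction there is a pMu formula $\phi$ with $\nu_H(X_1) = \Op{\phi}$, and hence $\nu_E(X_1) = \Op{\phi}$.

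The main obstacle is establishing that each transformation rule preserves the greatest solution, and in particular the correctness of Rule~1, which is the genuine fixpoint-theoretic content. The substitution rule (Rule~2) rests on a syntactic substitution lemma of the form $\Op{\psi[\chi/X_n]}_\rho = \Op{\psi}_{\rho[X_n \mapsto \Op{\chi}_\rho]}$, proved by a routine induction on the structure of $\psi$; the only subtlety is handling the monotonicity/polarity condition so that the fixpoint operators remain well-defined (the variable $X_n$ occurring only under an even number of negations, as imposed in the definition of $\cL$), which the substitution must respect. For Rule~1 I would argue as follows: writing $\cF_E$ for the equation functional, a solution $\rho$ of $E$ must satisfy $\rho(X_n) = \Op{\phi_n}_\rho$, so with the other variables held fixed $\rho(X_n)$ is \emph{a} fixpoint of $V \mapsto \Op{\phi_n}_{\rho[X_n\mapsto V]}$; for the \emph{greatest} solution one shows $\nu_E(X_n)$ is in fact the greatest such fixpoint, which is exactly $\Op{\nu X_n.\phi_n}_{\nu_E}$, giving $\nu_E = \nu_F$. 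This local-to-global fixpoint interchange is standard in the theory of the modal mu-calculus (it is precisely the justification cited from \cite{Mul98}), and since the semantic lattice $\pow{\dist{S}}$ and the equation functional $\cF_E$ enjoy exactly the same monotone-lattice structure as in the non-probabilistic case, the argument transfers verbatim, with distributions playing the role of states.
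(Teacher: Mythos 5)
Your proposal is correct and is essentially the argument the paper intends: the paper gives no proof beyond appealing to the three transformation rules of Table~\ref{f:rules} and the reference \cite{Mul98}, and your Gaussian-elimination induction --- Rule~1 justified by the Beki\v{c}-style local/global fixpoint interchange, Rule~2 by the substitution lemma $\Op{\psi[\chi/X_n]}_\rho = \Op{\psi}_{\rho[X_n \mapsto \Op{\chi}_\rho]}$ (using that the greatest post-fixpoint is a fixpoint), and Rule~3 once $X_n$ is no longer free --- is precisely the standard justification being cited. No gaps; nothing to add.
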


\section{Characteristic equation systems}\label{s:ces}
Following \cite{SI94}, the behaviour of a finite-state process can
be characterised by an equation system of modal formulae. In the
current section we show that this idea also applies in the
probabilistic setting.  For each behavioural relation $\CR$ over a
finite state space, ranging over the various simulation preorders
and bisimulation equivalences reviewed in Section~\ref{s:plts}, we
establish an equation system $E$ of modal formulae in pMu.
\[\begin{array}{rcl}
E: X_{s_1} & = & \phi_{s_1} \\
       & \vdots & \\
   X_{s_n} & = & \phi_{s_n}
\end{array}\]
There is exactly one such equation for each state $s_i$, and the
formulae $\phi_{s_i}$ do not contain fixpoint operators.
This equation system is guaranteed to have a
greatest solution $\nu_E$ which has the nice property that, for any
states $s,t$ in the state space in question, $s$ is related to $t$
via $\CR$ if and only if the point distribution $\pdist{t}$ belongs
to the set of distributions assigned to the variable $X_s$ by
$\nu_E$. Thus $\nu_E(X_s)$ is a \emph{characteristic formula} for
$s$ w.r.t.\ $\CR$ in the sense that $s\ \CR\ t$ iff $\pdist{t}$
satisfies $\nu_E(X_s)$.

\subsubsection{Strong probabilistic bisimulation}
The key ingredient for the modal characterisation of strong
probabilistic bisimulation is to construct an equation system that
captures  all the transitions of a pLTS. For each state $s$ we build
an equation $X_s=\phi_s$, where $X_s$ is a variable and $\phi_s$ is
of the form $\phi'_s\wedge\phi''_s$ with $\phi'_s$ a formula
describing the actions enabled by $s$ and $\phi''_s$ a formula
describing the consequences of performing these actions.
Intuitively, if state $s$ is related to state $t$ in a bisimulation
game, then $\phi'_s$ expresses the transitions that should be matched
up by $t$ and $\phi''_s$ expresses the capability of $s$ to match up
the transitions initiated by $t$.
More specifically, the equation system is given by the following
definition.
\begin{definition}\rm\label{d:cess}
Given a pLTS, its {\em characteristic equation system}
for strong probabilistic bisimulation consists of one equation
$X_s=\phi_{s}$ for each state $s\inp S$ where
\begin{equation}\label{e:cf}
\phi_s:=(\bigwedge_{s\stackrel{a}{\longrightarrow}\Delta}\diam{a}X_{\Delta})\wedge
(\bigwedge_{a\in\Act_\tau}\boxm{a} \bigoplus_{s\arh{a}\Delta} X_{\Delta})^{~1}
\end{equation}
with $X_\Delta :=
\bigoplus_{s\in\support{\Delta}}\Delta(s)\cdot {\downarrow}X_s$.
\end{definition}
\footnotetext[1]{The subformula $\bigoplus_{s\arh{a}\Delta} X_{\Delta}$ is
  equivalent to $\bigvee_{\pdist{s}\stackrel{a}{\longrightarrow}\Delta} X_{\Delta}$,
  and this is the form that we use to prove Theorem~\ref{t:ces}.
  If the given pLTS has nondeterministic choices among
  different transitions labelled with the same action, this
  disjunction is infinite. For example, if
  $s\ar{a}\pdist{s_i}$ for $i=1,2$, then
$\pdist{s}\ar{a}\Delta_p$, where $\Delta_p=p\cdot\pdist{s_1}+(1\mathord-p)\cdot\pdist{s_2}$, for any $p\in [0,1]$. 
The set $\sset{\Delta_p \mid p\in [0,1]}$ is uncountable, though it is finitely generable, as the convex closure of the two-element set $\sset{\Delta_0,\Delta_1}$.
The formula $\bigoplus_{s\arh{a}\Delta} X_{\Delta}$ exploits that
  fact to bypass the infinite disjunction; this formula is finite if
  the underlying pLTS is finitary.
}

\noindent The equation system thus constructed, interpreted according
to the strong semantics of pMu, has the required
property, as stated by the theorem below.%
\begin{theorem}\label{t:ces}
Let $E$ be the characteristic equation system for strong
probabilistic bisimulation on a given
pLTS. Then, for all states $s$ and $t$,\vspace{-1ex}
\begin{enumerate}
\item $s\aRel t$ for some strong probabilistic bisimulation $\aRel$ if and only if
$\pdist{t}\in\rho(X_s)$ for some post-fixpoint $\rho$ of $\cF_E$.
\item In particular,
$s\BISI t$ if and only if $\pdist{t}\in \Op{\nu_E(X_s)}$, i.e.,
$\nu_E(X_s)$ is a characteristic formula for $s$
w.r.t.\ strong probabilistic bisimilarity.
\end{enumerate}
\end{theorem}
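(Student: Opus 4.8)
The plan is to derive statement~2 from statement~1. By~(\ref{e:gfp}), $\nu_E$ is the supremum in $\Env$ of the post-fixpoints of $\cF_E$, and since suprema in $\Env$ are computed pointwise in the powerset lattice $\pow{\dist{S}}$, we have $\nu_E(X_s)=\bigcup\setof{\rho(X_s)}{\rho\sqsubseteq\cF_E(\rho)}$. Hence $\pdist{t}\in\Op{\nu_E(X_s)}=\nu_E(X_s)$ if and only if $\pdist{t}\in\rho(X_s)$ for some post-fixpoint $\rho$. Combined with the fact (Definition~\ref{d:sbisi}) that $s\BISI t$ holds exactly when $s\aRel t$ for some strong probabilistic bisimulation, statement~2 is immediate once statement~1 is proved, so I would concentrate on the two directions of statement~1.

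One fact I would isolate first, since it feeds both directions, is the identity $\Op{X_\Delta}_\rho=\setof{\Theta}{\Delta\lift{\aRel}\Theta}$, valid for any $\rho$ where $\aRel$ is the relation defined by $u\aRel v$ iff $\pdist{v}\in\rho(X_u)$. Unfolding the semantics of $\bigoplus$, $\downarrow$ and $X_u$ shows $\Theta\in\Op{X_\Delta}_\rho$ iff $\Theta=\sum_{u\in\support{\Delta}}\Delta(u)\cdot\Theta_u$ with every state in $\support{\Theta_u}$ related to $u$; the equivalence with $\Delta\lift{\aRel}\Theta$ then follows from Proposition~\ref{prop:lifting}(1) in one direction and from the weight-function characterisation of Proposition~\ref{p:lift.char}(1) in the other.

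For the ``only if'' direction, given a strong probabilistic bisimulation $\CR$ I would set $\rho(X_s):=\setof{\Theta}{\pdist{s}\lift{\aRel}\Theta}$; then $s\aRel t$ gives $\pdist{s}\lift{\aRel}\pdist{t}$, so $\pdist{t}\in\rho(X_s)$, and it remains to verify $\rho(X_s)\subseteq\Op{\phi_s}_\rho$, i.e.\ that $\rho$ is a post-fixpoint. Taking $\Theta=\sum_i p_i\cdot\pdist{t_i}$ with $s\aRel t_i$, I would check the two conjuncts separately: for $\phi'_s$, each transition $s\ar{a}\Delta$ is matched from every $t_i$ because $\CR$ is a simulation, and the matching transitions are recombined by Proposition~\ref{prop:lifting}(1) into $\Theta\ar{a}\Theta'$ with $\Delta\lift{\aRel}\Theta'$, so $\Theta\models\diam{a}X_\Delta$ by the isolated identity; for $\phi''_s$ I would use instead that $\CR^{-1}$ is a simulation. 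Conversely, for the ``if'' direction, given a post-fixpoint $\rho$ with $\pdist{t}\in\rho(X_s)$, I would define $\aRel$ by $u\aRel v$ iff $\pdist{v}\in\rho(X_u)$, so that $s\aRel t$ and the isolated identity both apply, and read both halves of the bisimulation condition off the inclusion $\rho(X_s)\subseteq\Op{\phi_s}_\rho$: the conjunct $\phi'_s$ yields for each $s\ar{a}\Delta$ a $\Theta'$ with $\pdist{t}\ar{a}\Theta'$ and $\Delta\lift{\aRel}\Theta'$, making $\CR$ a simulation, while $\phi''_s$, read through its box modality and the disjunctive reformulation of the footnote, yields for each $t\ar{a}\Delta'$ a matching $\pdist{s}\ar{a}\Delta$ with $\Delta\lift{\aRel}\Delta'$, making $\CR^{-1}$ a simulation.

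I expect the main obstacle to be the treatment of $\phi''_s$ in the ``only if'' direction. To establish $\Theta\models\boxm{a}\bigoplus_{s\ar{a}\Delta}X_\Delta$ I must take an arbitrary $\Theta'$ with $\Theta\ar{a}\Theta'$, decompose this lifted transition through the support states $t_i$ by Proposition~\ref{prop:lifting}(2), match each resulting $t_i$-transition from $s$ using that $\CR^{-1}$ is a simulation, and reassemble a single $\Delta$ with $\pdist{s}\ar{a}\Delta$ and $\Delta\lift{\aRel}\Theta'$ via Proposition~\ref{prop:lifting}(1). The bookkeeping of these repeated decomposition and recombination steps, together with the need to accommodate the possibly uncountable set of $a$-successors through the finitely generable disjunction identified in the footnote, is where the care is required.
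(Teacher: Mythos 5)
Your proposal is correct and follows essentially the same route as the paper's proof: the same reduction of statement~2 to statement~1 via the Knaster--Tarski characterisation of $\nu_E$, the same key lemma relating $\Op{X_\Delta}_\rho$ to the lifted relation (the paper proves the two directions of your identity separately as its claims (\ref{e:left}) and (\ref{e:right})), and the same case analysis on the two conjuncts of $\phi_s$. The only substantive deviation is your choice of post-fixpoint $\rho(X_s)=\setof{\Theta}{\pdist{s}\lift{\CR}\Theta}$ in the ``only if'' direction, where the paper takes the smaller set $\setof{\pdist{t}}{s\mathrel{\CR}t}$ of point distributions; both work, but the paper's choice avoids exactly the extra decomposition-and-recombination bookkeeping you flag as the main obstacle, since one then only needs to verify $\pdist{t}\in\Op{\phi_s}_{\rho_\CR}$ rather than membership for arbitrary convex combinations.
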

\begin{proof}
Let $E$ be the characteristic equation system for strong
probabilistic bisimulation on a given pLTS. We only consider the
first statement, from which the second statement follow immediately.

($\Leftarrow$) For this direction, assuming a
  post-fixpoint $\rho$ of $\cF_E$, we construct a probabilistic
  bisimulation relation that includes all state pairs $(s,t)$ satisfying
  $\pdist{t}\in \rho(X_s)$.
Let $\aRel=\setof{(s,t)}{\pdist{t}\in \rho(X_s)}$. We
  first show that
\begin{equation}\label{e:left}
\Theta\in\Op{X_\Delta}_{\rho}\ {\rm implies}\ \Delta\lift{\aRel}\Theta.
\end{equation}
Let $X_\Delta = \bigoplus_{i\in I}p_i\cdot {\downarrow}X_{s_i}$, so that
$\Delta=\sum_{i\in I}p_i\cdot \pdist{s_i}$. Suppose
$\Theta\in\Op{X_\Delta}_{\rho}$. We have that $\Theta=\sum_{i\in I}
p_i\cdot\Theta_i$ and, for all $i\inp I$ and all
$t\inp\support{\Theta_i}$, that $\pdist{t}\in \Op{X_{s_i}}_{\rho}$,
i.e.\ $s_i \aRel t$. It follows that \plat{$\pdist{s_i}
\lift{\aRel}\Theta_i$} and thus \plat{$\Delta \lift{\aRel} \Theta$}, using
Proposition~\ref{prop:lifting}(1).

Now we show that $\aRel$ is a probabilistic bisimulation.
\begin{enumerate}
\item Suppose $s\aRel t$ and $s\ar{a}\Delta$. Then
  $\pdist{t}\in\rho(X_s)\subseteq\Op{\phi_s}_{\rho}$. It follows from
  (\ref{e:cf}) that $\pdist{t}\in\Op{\diam{a}X_\Delta}_{\rho}$. So
  there exists some $\Theta$ such that $\pdist{t}\ar{a}\Theta$ and
  $\Theta \in \Op{X_\Delta}_{\rho}$. Now we apply (\ref{e:left}).

\item Suppose $s\aRel t$ and $t\ar{a}\Theta$. Then
  $\pdist{t}\in\rho(X_s)\subseteq\Op{\phi_s}_{\rho}$. It follows from
  (\ref{e:cf}) that $\pdist{t}\in\Op{\boxm{a}\bigvee_{\pdist{s}
  \stackrel{a}{\longrightarrow}\Delta}
  X_{\Delta}}$. Notice that it
  must be the case that $\pdist{s}\ar{a}$, otherwise, $\pdist{t}\in
  \Op{[a]\false}_{\rho}$ and thus $t\nar{a}$, in contradiction
  with the assumption $t\ar{a}\Theta$. Therefore,
  $\Theta\in\Op{\bigvee_{\pdist{s}\ar{a}\Delta}X_{\Delta}}_{\rho}$,
  which implies
  $\Theta\in\Op{X_\Delta}_{\rho}$ for some $\Delta$ with
  $\pdist{s}\ar{a}\Delta$.
  Now we apply (\ref{e:left}).
\end{enumerate}

($\Rightarrow$)
Given a strong probabilistic bisimulation $\CR$, we construct a
post-fixpoint of $\cF_E$ such that whenever $s \aRel t$ then
$\pdist{t}$ falls into the set of distributions assigned to $X_s$ by
that post-fixpoint.  We define the environment $\rho_\CR$ by
 \[\rho_\CR(X_s):=\setof{\pdist{t}}{s\aRel t}\]
 and show that $\rho_\CR$ is a post-fixpoint of
 $\cF_E$, i.e.
\begin{equation}\label{e:post}
\rho_\CR \sqsubseteq \cF_E(\rho_\CR).
\end{equation}

We first show that
\begin{equation}\label{e:right}
\Delta\lift{\CR}\Theta\ {\rm implies}\
\Theta\in\Op{X_\Delta}_{\rho_\CR}.
\end{equation}
Suppose $\Delta\lift{\CR}\Theta$, we have that (i)
$\Delta=\sum_{i\in I}p_i\cdot\pdist{s_i}$, (ii)
$\Theta=\sum_{i\in I}p_i\cdot\pdist{t_i}$, (iii)
$s_i\aRel t_i$ for all $i\in I$. We know from (iii) that
$\pdist{t_i}\in\Op{X_{s_i}}_{\rho_\CR}$ and thus
$\pdist{t_i}\in\Op{{\downarrow}X_{s_i}}_{\rho_\CR}$. Using (ii) we have that
$\Theta\in\Op{\bigoplus_{i\in I}p_i\cdot {\downarrow}X_{s_i}}_{\rho_\CR}$. Using
(i) we obtain $\Theta\in\Op{X_\Delta}_{\rho_\CR}$.

Now we are in a position to show (\ref{e:post}). Suppose $\pdist{t}\in
\rho_\CR(X_s)$. We must prove that
$\pdist{t}\in \Op{\phi_s}_{\rho_\CR}$, i.e.
\[\pdist{t}\in
(\bigcap_{s\stackrel{a}{\longrightarrow}\Delta}\Op{\diam{a}X_\Delta}_{\rho_\CR})
\cap
(\bigcap_{a\in\Act_\tau}\Op{\boxm{a}\bigvee_{\pdist{s}\stackrel{a}{\longrightarrow}\Delta}X_{\Delta}}_{\rho_\CR})\]
by (\ref{e:cf}). This can be done by showing that $\pdist{t}$
belongs to each of the two parts of the outermost intersection.
\begin{enumerate}
\item
  Assume that $s\ar{a}\Delta$ for some $a\in\Act_\tau$ and
  $\Delta\in\dist{S}$. Since $s\aRel
  t$, there exists some $\Theta$ such that $\pdist{t}\ar{a}\Theta$ and
  $\Delta \lift{\CR} \Theta$. By (\ref{e:right}), we get
  $\Theta\in\Op{X_\Delta}_{\rho_\CR}$. It follows that $\pdist{t}\in
  \Op{\diam{a}X_\Delta}_{\rho_\CR}$.
\vspace{2pt}
\item Let $a \inp Act_\tau$.
 Whenever $\pdist{t}\ar{a}\Theta$, then by $s\aRel t$ there must
 be some $\Delta$ such that $\pdist{s}\ar{a}\Delta$ and $\Delta \lift{\CR}
 \Theta$. By (\ref{e:right}), we get $\Theta\in\Op{X_\Delta}_{\rho_\CR}$
 and thus $\Theta \in \Op{\bigvee_{\pdist{s}\ar{a}\Delta}X_{\Delta}}_{\rho_\CR}$.
 As a consequence,
 $\pdist{t}\in \Op{\boxm{a}\bigvee_{\pdist{s}\ar{a}\Delta}X_{\Delta}}_{\rho_\CR}$.
\hfill\qed
\end{enumerate}
\end{proof}

\subsubsection{Strong probabilistic simulation}
In a simulation game, if state $s$ is related to state $t$, we only
need to check that all transitions initiated by $s$ should be
matched up by transitions from $t$, and we do not care about the
inverse direction: the capability of $s$ to simulate $t$. Therefore,
it is not surprising that characteristic equation systems for strong
probabilistic simulation are defined as in Definition~\ref{d:cess}
except that we drop the second part of the conjunction in
(\ref{e:cf}), so $\phi_s$ takes the form
\begin{equation}\label{e:cfsi}
\phi_s:=\bigwedge_{s\arh{a}\Delta}\diam{a}X_{\Delta}
\end{equation}
With this modification, we have the expected property for strong
probabilistic simulation, which can be shown by using the ideas in
the proof of Theorem~\ref{t:ces}, but with fewer cases to analyse.

\subsubsection{Weak probabilistic bisimulation}
Characteristic equation systems for weak probabilistic bisimulation
are defined as in Definition~\ref{d:cess} except that the weak
semantics of pMu is employed and $\phi_s$ takes the form
\begin{equation}\label{e:cfw}
\phi_s:=(\bigwedge_{s\arh{a}\Delta}\diam{a}X_{\Delta})\wedge
(\bigwedge_{a\in\Act_\tau}\boxm{a}\bigvee_{\pdist{s}\darhhat{a}\Delta}
X_{\Delta} )^{~2}
\end{equation}
\footnotetext[2]{Using results from Markov Decision Processes
  \cite{Put94}, in a finitary pLTS also this infinite disjunction can
  be expressed as finite convex combination; however, we will not
  elaborate this here.}

\noindent
With the above modifications, we have the counterpart of
Theorem~\ref{t:ces}, with a similar proof.

\subsubsection{Weak probabilistic simulation}
Characteristic equation systems for weak probabilistic simulation
are in exactly the same form as characteristic equation systems for
strong probabilistic simulation (cf. (\ref{e:cfsi})),
but using the weak semantics of pMu.

\subsubsection{Forward simulation}
Characteristic equation systems for forward simulation are
in the same form as characteristic equation systems for weak
probabilistic simulation, but with $X_\Delta :=
\bigoplus_{s\in\support{\Delta}}\Delta(s)\cdot X_s$, i.e.\ dropping
the $\downarrow$.

\subsubsection{Failure simulation}
To give a modal characterisations for failure simulation
we need to add modal formulae of the form $\refuse{A}$ with
$A\subseteq\Act$, first introduced in \cite{DGHM08}, to
pMu, with the meaning given by
$$\Op{\refuse{A}}_\rho  =  \setof{\Delta\in\dist{S}}{\exists \Delta':
    \Delta\darhat{\tau}\Delta'\ \wedge\ \Delta'\nar{A}}$$
The formula $\refuse{A}$ holds for $\Delta$ if
by doing internal actions only $\Delta$ can evolve into a distribution
such that no state in its support can perform an action from $A\cup\{\tau\}$.
This time $\phi_s$ takes the form
\begin{equation}\label{e:cf.failsim}
\phi_s:=\left\{\begin{array}{ll}
\bigwedge_{s\ar{a}\Delta}\diam{a}X_{\Delta} & {\rm if}\ s\ar{\tau}\\
(\bigwedge_{s\ar{a}\Delta}\diam{a}X_{\Delta})\wedge
\refuse{\setof{a}{s\nar{a}}} \qquad & {\rm otherwise}
\end{array}\right.
\end{equation}
with $X_\Delta := \bigoplus_{s\in\support{\Delta}}\Delta(s)\cdot X_s$.
Inspired by \cite{DGHM08}, here we distinguish two cases, depending
on the possibility of making
an internal transition from $s$.

\medskip\noindent
In summary, we have the following property.
\begin{theorem}
Let $E_{\SI}$ be the characteristic equation system for strong probabilistic simulation on a given pLTS. Let $E_{\WBISI}$ ($E_{\WSI}, E_{\forsim}, E_{\failsim}$, respectively) be the characteristic equation system for weak probabilistic bisimulation (weak probabilistic simulation, forward simulation, failure
simulation, respectively) on a given divergence-free pLTS. Then, for all
states $s$, $t$ and distributions $\Theta$,\vspace{-1ex}
\begin{enumerate}
\item $s\aRel t$ for some strong probabilistic simulation (weak probabilistic bisimulation, weak probabilistic simulation, respectively) $\aRel$ if and only if
$\pdist{t}\in\rho(X_s)$ for some post-fixpoint $\rho$ of $\cF_{E_{\SI}}$ ($\cF_{E_{\WBISI}}$, $\cF_{E_{\WSI}}$, respectively). 

\item $s\aRel \Theta$ for some forward simulation (failure simulation) $\aRel$ if and only if
$\Theta\in\rho(X_s)$ for some post-fixpoint $\rho$ of $\cF_{E_{\forsim}}$ ($\cF_{E_{\failsim}}$).

\item In particular,
\begin{enumerate}
\item $s\SI t$ if and only if $\pdist{t}\in \Op{\nu_{E_{\SI}}(X_s)}$.
\item $s\WBISI t$ if and only if $\pdist{t}\in \Op{\nu_{E_{\WBISI}}(X_s)}$.
\item $s\WSI t$ if and only if $\pdist{t}\in \Op{\nu_{E_{\WSI}}(X_s)}$.
\item $s\failsim \Theta$ if and only if $\Theta\in \Op{\nu_{E_{\forsim}}(X_s)}$.
\item $s\failsim \Theta$ if and only if $\Theta\in \Op{\nu_{E_{\failsim}}(X_s)}$. \hfill\qed
\end{enumerate}
\end{enumerate}
\end{theorem}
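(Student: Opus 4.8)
The plan is to adapt, relation by relation, the two-direction argument that proves Theorem~\ref{t:ces}. For each relation I would first establish the two auxiliary facts corresponding to (\ref{e:left}) and (\ref{e:right}): writing $\aRel$ for the relation $\setof{(s,t)}{\pdist{t}\in\rho(X_s)}$ in the state-based cases ($\SI,\WBISI,\WSI$) and $\setof{(s,\Theta)}{\Theta\in\rho(X_s)}$ in the distribution-based cases ($\forsim,\failsim$), one shows that $\Theta\in\Op{X_\Delta}_\rho$ implies $\Delta\lift{\aRel}\Theta$, and conversely that $\Delta\lift{\aRel}\Theta$ implies $\Theta\in\Op{X_\Delta}_{\rho_\aRel}$ for $\rho_\aRel(X_s):=\setof{\pdist{t}}{s\aRel t}$ (resp.\ $\setof{\Theta}{s\aRel\Theta}$). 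Both facts drop straight out of the definition of $X_\Delta$ together with Proposition~\ref{prop:lifting}(1), exactly as in the proof of Theorem~\ref{t:ces}. Given these, the $(\Leftarrow)$ direction of parts 1 and 2 constructs a (bi)simulation from a post-fixpoint $\rho$, while the $(\Rightarrow)$ direction checks that $\rho_\aRel$ is a post-fixpoint. Part 3 then follows immediately from parts 1 and 2: by (\ref{e:gfp}) the greatest solution $\nu_E$ is itself a post-fixpoint and also the supremum of all post-fixpoints, so $\pdist{t}\in\Op{\nu_E(X_s)}$ holds iff $\pdist{t}\in\rho(X_s)$ for some post-fixpoint $\rho$.

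The five relations differ only in which clauses of this template survive. For the simulation preorders $\SI$ and $\WSI$ the box conjunct of $\phi_s$ is dropped (cf.\ (\ref{e:cfsi})), so only the diamond clause (part 1 of the proof of Theorem~\ref{t:ces}) has to be checked and there is no matching obligation in the reverse direction. For all four weak relations the strong transitions $\ar{a}$ are replaced throughout by the weak transitions $\darhat{a}$; in the $(\Leftarrow)$ direction this costs nothing, since the weak semantics of $\diam{a}$ already supplies a weak transition, and in the diamond part of the $(\Rightarrow)$ direction a single move $s\ar{a}\Delta$ is matched by a weak move of the simulating side directly from the defining clause of the relation. For the distribution-based relations $\forsim$ and $\failsim$ the witness lives in $S\times\dist{S}$, so I would use the second lifting construction, the one introduced just after Proposition~\ref{p:lift.char}, and read $X_\Delta$ without the leading ${\downarrow}$; the auxiliary facts above are then interpreted with respect to this lifting.

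Failure simulation carries the one extra ingredient, the refusal conjunct $\refuse{A_s}$ with $A_s:=\setof{a}{s\nar{a}}$ present in the $s\nar{\tau}$ case of (\ref{e:cf.failsim}). In the $(\Leftarrow)$ direction, clause 2 of Definition~\ref{d:failsim} asks, for any $A\subseteq\Act$ with $s\nar{A}$, for a $\Theta'$ with $\Theta\darhat{\tau}\Theta'$ and $\Theta'\nar{A}$; since $s\nar{A}$ forces $s\nar{\tau}$ and $A\subseteq A_s$, membership of $\Theta$ in $\Op{\refuse{A_s}}_\rho$ yields such a $\Theta'$ directly from the semantics of $\refuse{A_s}$, as $\Theta'\nar{A_s}$ implies $\Theta'\nar{A}$. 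In the $(\Rightarrow)$ direction one has $s\nar{A_s}$ in the otherwise case, and applying clause 2 with $A:=A_s$ delivers the $\Theta'$ witnessing $\Theta\in\Op{\refuse{A_s}}_{\rho_\aRel}$. These steps are routine once the refusal sets are lined up.

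I expect the main obstacle to be the box clause in the $(\Rightarrow)$ direction for weak probabilistic bisimulation. Here the weak semantics of $\boxm{a}$ quantifies over \emph{every} $\Theta$ with $\pdist{t}\darhat{a}\Theta$, i.e.\ over arbitrarily long internal computations, whereas the defining clause of the relation only matches a single transition. Closing this gap is exactly the role of the transfer property, Lemma~\ref{transfer property}. To invoke it I would note that ``$s\aRel t$ for some weak bisimulation $\aRel$'' already means $s\WBISI t$, and therefore take the witnessing relation in this direction to be the maximal relation $\WBISI$; its symmetry gives $\pdist{t}\lift{\WBISI}\pdist{s}$, and Lemma~\ref{transfer property}(2) then promotes the weak move $\pdist{t}\darhat{a}\Theta$ to a matching weak move $\pdist{s}\darhat{a}\Delta$ with $\Delta\lift{\WBISI}\Theta$, which is precisely what the disjunction $\bigvee_{\pdist{s}\darhhat{a}\Delta}X_\Delta$ of (\ref{e:cfw}) demands. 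With this step in place, the remaining cases are specialisations of the strong probabilistic bisimulation argument and present no further difficulty.
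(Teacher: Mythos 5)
Your proposal is correct and follows essentially the same route as the paper, which itself only sketches this theorem as a case-by-case adaptation of Theorem~\ref{t:ces} (with the weak probabilistic bisimulation case worked out in an appendix that, like you, takes $\WBISI$ itself as the witnessing relation in the $(\Rightarrow)$ direction and uses the transfer property to match arbitrary weak moves $\pdist{t}\darhat{a}\Theta$ in the box clause). Your identification of the two auxiliary facts, the role of Lemma~\ref{transfer property}(2), the second lifting for the distribution-valued relations, and the refusal-set bookkeeping for failure simulation all agree with the paper's treatment.
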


\noindent
We can also consider the strong case for $\forsim$ and $\failsim$ by
treating $\tau$ as an external action, and give characteristic
equation systems. In the strong case for $\failsim$ only the
``otherwise'' in (\ref{e:cf.failsim}) applies, with $\refuse{A}$
represented as $\bigwedge_{a\in A}[a]\false$.

\section{Modal characterisations}\label{s:mod.char}
In the previous sections we have pursued logical characterisations for
various behavioural relations by characteristic formulae. A weaker
form of characterisation, which is commonly called a modal
characterisation of a behavioural relation, consists of isolating a
class of formulae with the property that two states are equivalent if
and only if they satisfy the same formulae from that class.

\begin{definition}\rm\label{d:phml}
Let $\cL^\mu_\BISI$ be simply the class $\cL$ of modal formulae
defined in Section~\ref{s:pmu}, equipped with the strong semantics
of Table~\ref{f:semantics}.
With $\cL^\mu_\SI$ we denote the fragment of this class obtained by
skipping the modalities $\neg$ and $\boxm{a}$.
The classes $\cL^\mu_\WBISI$ and \plat{$\cL^\mu_\WSI$} are defined likewise,
but equipped with the weak semantics.
Moreover, $\cL^\mu_{\forsim}$ is the fragment of \plat{$\cL^\mu_\WSI$}
obtained by skipping $\downarrow$, and
\plat{$\cL^\mu_{\failsim}$} is obtained from \plat{$\cL^\mu_{\forsim}$} by
addition of the modality $\refuse{A}$.

In all cases, dropping the superscript $\mu$ denotes the subclass
obtained by dropping the variables and fixpoint operators.

For $\mathord{\aRel} \inp \{\BISI, \SI, \WBISI, \WSI, \forsim,
\failsim\}$ we write $\Delta \sqsubseteq^\mu_{\aRel} \Theta$ just when
$\Delta\inp\Op{\phi} \mathbin\Rightarrow \Theta\inp\Op{\phi}$ for all
closed $\phi\in\cL^\mu_{\aRel}$, and $\Delta \sqsubseteq_{\aRel} \Theta$ just when
$\Delta\inp\Op{\phi} \Rightarrow \Theta\inp\Op{\phi}$ for all $\phi\in\cL_{\aRel}$.
\end{definition}

\noindent Note that the relations $\sqsubseteq^\mu_\BISI$,
$\sqsubseteq^\mu_\WBISI$, $\sqsubseteq_\BISI$ and
$\sqsubseteq_\WBISI$ are symmetric. For this reason we will employ
the symbol $\equiv$ instead of $\sqsubseteq$ when referring to them.

We have the following modal characterisation for strong
probabilistic bisimilarity, strong probabilistic similarity, weak
probabilistic bisimilarity, weak probabilistic similarity, forward
similarity, and failure similarity.

\begin{theorem}[Modal characterisation]\rm\label{t:modal.characterisation}\\
Let $s$ and $t$ be states in a divergence-free pLTS.\\
\begin{tabular}{r@{.~~}c@{~~iff~~}c@{~~iff~~}c}
1&$s\BISI t$ & $\pdist{s}\equiv^\mu_{\BISI}\pdist{t}$ & $\pdist{s}\equiv_{\BISI}\pdist{t}$.\\
2&$s\SI t$ & $\pdist{s} \sqsubseteq^\mu_\SI \pdist{t}$ & $\pdist{s} \sqsubseteq_\SI \pdist{t}$.\\
3&$s\WBISI t$ & $\pdist{s} \equiv^\mu_\WBISI \pdist{t}$ & $\pdist{s} \equiv_\WBISI \pdist{t}$.\\
4&$s\WSI t$ & $\pdist{s} \sqsubseteq^\mu_\WSI \pdist{t}$ & $\pdist{s} \sqsubseteq_\WSI \pdist{t}$.\\
5&$s\forsim \Theta$ & $\pdist{s} \sqsubseteq^\mu_{\forsim} \Theta$ & $\pdist{s} \sqsubseteq_{\forsim} \Theta$.\\
6&$s\failsim \Theta$ & $\pdist{s} \sqsubseteq^\mu_{\failsim} \Theta$ & $\pdist{s} \sqsubseteq_{\failsim} \Theta$.
\end{tabular}
\end{theorem}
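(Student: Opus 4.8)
The plan is to prove all three conditions in each row equivalent by a cycle of implications, $s\aRel t\ \Rightarrow\ \pdist{s}\sqsubseteq^\mu_\aRel\pdist{t}\ \Rightarrow\ \pdist{s}\sqsubseteq_\aRel\pdist{t}\ \Rightarrow\ s\aRel t$, where $\aRel$ ranges over the six relations and $\pdist{t}$ is replaced by a general $\Theta$ for $\forsim$ and $\failsim$; for $\BISI$ and $\WBISI$ the relations are symmetric, so the one-sided statement yields the $\equiv$-versions. The middle implication is immediate, since the closed formulae of $\cL_\aRel$ form a subclass of $\cL^\mu_\aRel$. So the two substantive tasks are \emph{soundness}, $s\aRel t\Rightarrow\pdist{s}\sqsubseteq^\mu_\aRel\pdist{t}$, and \emph{completeness}, $\pdist{s}\sqsubseteq_\aRel\pdist{t}\Rightarrow s\aRel t$.

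For soundness I would lift each relation to distributions --- via Definition~\ref{d:lift} when $\aRel\subseteq S\times S$ and via its $S\times\dist{S}$ variant for $\forsim,\failsim$ --- and prove the key lemma that $\Op{\phi}_\rho$ is $\lift{\aRel}$-\emph{upward closed} (i.e.\ $\Delta\in\Op{\phi}_\rho$ and $\Delta\lift{\aRel}\Theta$ imply $\Theta\in\Op{\phi}_\rho$) for every $\phi\in\cL^\mu_\aRel$, whenever $\rho$ maps each variable to an upward-closed set. The proof is by induction on $\phi$. The modalities $\diam{a}$ and $\boxm{a}$ are handled by the transfer property, Lemma~\ref{transfer property}, which moves a (weak) transition of $\Delta$ onto one of $\Theta$; the probabilistic constructs $\bigoplus_i\phi_i$ and $\bigoplus_i p_i\cdot\phi_i$ by Proposition~\ref{prop:lifting}(2), which splits $\Delta\lift{\aRel}\Theta$ along a convex decomposition; $\downarrow\phi$ directly, using that for $\aRel\subseteq S\times S$ the lifting matches supports state-by-state (this is exactly why the distribution-based relations $\forsim,\failsim$ must drop $\downarrow$); and $\refuse{A}$ by clause~2 of Definition~\ref{d:failsim}. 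The fixpoint cases $\mu X.\phi,\nu X.\phi$ go through because the upward-closed subsets of $\dist{S}$ form a complete sublattice of $\pow{\dist{S}}$ on which the monotone functionals act, so the Knaster--Tarski fixpoints computed there agree with the global ones. Negation and $\boxm{a}$ occur only for $\BISI$ and $\WBISI$, where $\lift{\aRel}$ is symmetric so that ``upward closed'' means ``closed'' and both are sound; for the genuine preorders one keeps only the positive fragment. Specialising to $\pdist{s}\lift{\aRel}\pdist{t}$, which holds iff $s\aRel t$, gives soundness.

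For completeness, fix a row and set $\CR:=\setof{(s,t)}{\pdist{s}\sqsubseteq_\aRel\pdist{t}}$; I would show $\CR$ is a (bi)simulation of the required kind, which yields $\pdist{s}\sqsubseteq_\aRel\pdist{t}\Rightarrow s\aRel t$. Here finiteness of $S$ is decisive. Whenever $(u,v)\notin\CR$ there is, by definition, a separating $\chi_{u,v}\in\cL_\aRel$ with $\pdist{u}\in\Op{\chi_{u,v}}$ but $\pdist{v}\notin\Op{\chi_{u,v}}$; the \emph{finite} conjunction $\psi_u:=\bigwedge\setof{\chi_{u,v}}{(u,v)\notin\CR}$ then satisfies $\pdist{v}\in\Op{\psi_u}$ iff $(u,v)\in\CR$, so any $\Theta\in\Op{\downarrow\psi_u}$ has $\support{\Theta}\subseteq\setof{v}{(u,v)\in\CR}$. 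Given $(s,t)\in\CR$ and $s\ar{a}\Delta$ with $\Delta=\sum_i p_i\cdot\pdist{s_i}$, the formula $\diam{a}\bigoplus_i p_i\cdot{\downarrow}\psi_{s_i}$ --- read with the weak semantics for $\WBISI,\WSI$ --- holds for $\pdist{s}$, hence for $\pdist{t}$; unfolding its meaning produces a matching transition $\pdist{t}\ar{a}\Theta$ (resp.\ $\darhat{a}$) with $\Theta=\sum_i p_i\cdot\Theta_i$ and $\support{\Theta_i}\subseteq\setof{v}{(s_i,v)\in\CR}$, which is precisely a witness for $\Delta\lift{\CR}\Theta$. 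For $\BISI,\WBISI$ the relation $\CR$ is an equivalence and one matches accumulated class-probabilities through Proposition~\ref{p:lift.char}(2), the converse clause being supplied by the $\boxm{a}$-modality.

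The main obstacle is completeness for the distribution-based relations $\forsim$ and $\failsim$, whose logics lack $\downarrow$. There the candidate is $\CR:=\setof{(s,\Theta)}{\pdist{s}\sqsubseteq_\forsim\Theta}\subseteq S\times\dist{S}$, and matching $s\ar{a}\Delta$ requires some reachable $\Theta'$ with $\Delta\lift{\CR}\Theta'$; but without $\downarrow$ a formula can constrain only convex \emph{decompositions} of a successor distribution, not the individual states in its support, so the clean state-by-state construction above is unavailable. I expect to close this by exploiting that in a finitary pLTS the set $\setof{\Theta'}{\Theta\darhat{a}\Theta'}$ is the convex closure of a finite set, hence a compact convex polytope: if no reachable $\Theta'$ were $\lift{\CR}$-above $\Delta$, a separation argument would yield rational coefficients defining, via the exact-weight modality $\bigoplus_i p_i\cdot\phi_i$ (and $\refuse{A}$ for the refusal clause of failure simulation), a formula in $\cL_{\forsim}$ (resp.\ $\cL_{\failsim}$) that separates $\pdist{s}$ from $\Theta$, contradicting $(s,\Theta)\in\CR$.
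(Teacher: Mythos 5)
Your soundness argument and your completeness argument for rows 1--4 track the paper's proof closely: the paper also proceeds by structural induction on $\phi$ relative to environments that are ``compatible with'' the lifted relation (your upward-closedness), handles $\diam{a}$/$\boxm{a}$ via Lemma~\ref{transfer property}, $\bigoplus$ via Proposition~\ref{prop:lifting}(2), and the fixpoints by exactly the closure construction your sublattice remark abbreviates. Your completeness uses a per-state master formula $\psi_u$ (a finite conjunction over the finitely many non-related states) where the paper conjoins separating formulae indexed by the ``bad'' matching transitions $\Theta\in T$; your variant is a legitimate and if anything cleaner packaging of the same idea. (The paper only writes out row 1 and declares the rest analogous.)

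The genuine gap is completeness for rows 5 and 6. You correctly identify that the state-by-state construction fails for $\forsim$ and $\failsim$ because $\cL_{\forsim}$ and $\cL_{\failsim}$ lack $\downarrow$, but what you offer in its place --- a compactness/separating-hyperplane argument on the polytope of weak derivatives --- is only a hope, not a proof, and it rests on unestablished prerequisites: you would first have to show that $\setof{\Theta'}{\pdist{u}\sqsubseteq_{\forsim}\Theta'}$ is closed and convex, and then that an arbitrary separating linear functional can be expressed as a formula of $\cL_{\forsim}$, neither of which is obvious. More importantly, the detour is unnecessary. The logic's conjunction $\bigwedge_{i\in I}$ is over an \emph{arbitrary} index set, so for each state $u$ you may form $\psi_u:=\bigwedge\setof{\chi_{u,\Gamma}}{\Gamma\in\dist{S},\ \pdist{u}\not\sqsubseteq_{\forsim}\Gamma}$, where $\chi_{u,\Gamma}$ separates $\pdist{u}$ from $\Gamma$; exactly as in your finite case one gets $\Gamma\in\Op{\psi_u}$ iff $\pdist{u}\sqsubseteq_{\forsim}\Gamma$. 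Then for $s\ar{a}\Delta$ with $\Delta=\sum_i p_i\cdot\pdist{s_i}$ the formula $\diam{a}\bigoplus_i p_i\cdot\psi_{s_i}$ (no $\downarrow$ needed) holds for $\pdist{s}$, hence for $\Theta$, and unfolding its weak semantics yields $\Theta\darhat{a}\Theta'$ with $\Theta'=\sum_i p_i\cdot\Theta'_i$ and $s_i\mathrel{\CR}\Theta'_i$ --- which is precisely the $S\times\dist{S}$ lifting $\Delta\lift{\CR}\Theta'$ required by Definition~\ref{d:failsim}; the refusal clause is discharged by $\refuse{\setof{a}{s\nar{a}}}$, which $\pdist{s}$ satisfies whenever $s\nar{A}$. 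You should replace your speculative separation argument with this; as it stands, rows 5 and 6 are not proved.
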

Note that $\pdist{s}\equiv^\mu_{\BISI}\pdist{t} \Rightarrow s\BISI t$
is an immediate consequence of Theorem~\ref{t:ces}:
From $s \BISI s$ we obtain $\pdist{s}\in \Op{\nu_E(X_s)}$.
Together with  $\pdist{s}\equiv^\mu_{\BISI}\pdist{t}$ this yields
$\pdist{t}\in \Op{\nu_E(X_s)}$, hence $s \BISI t$.

\begin{proof}
We only prove the first statement; the others can be shown analogously.
In fact we establish the more general result that
$$\Delta\lift{\BISI} \Theta \qquad\Leftrightarrow\qquad
\Delta\equiv^\mu_{\BISI}\Theta \qquad\Leftrightarrow\qquad
\Delta\equiv_{\BISI}\Theta$$ from which statement 1 of
Theorem~\ref{t:modal.characterisation} follows immediately.  The
implication\linebreak $\Delta\lift{\BISI} \Theta \Rightarrow
\Delta\equiv^\mu_{\BISI}\Theta$ expresses the \emph{soundness} of
the logic $\cL^\mu_\BISI$ w.r.t.\ the relation $\lift{\BISI}$,
whereas the implication $\Delta\equiv_{\BISI}\Theta \Rightarrow
\Delta\lift{\BISI} \Theta$ expresses the \emph{completeness} of
$\cL_\BISI$ w.r.t.\ $\lift{\BISI}$.  The implication
$\Delta\equiv^\mu_{\BISI}\Theta \Rightarrow
\Delta\equiv_{\BISI}\Theta$ is trivial.

(Soundness) An environment $\rho:\Var\rightarrow\pow{\dist{S}}$ is
called \emph{compatible with $\lift{\BISI}$} if for all $X\in\Var$ we have
that $$\Delta\lift{\BISI} \Theta \Rightarrow (\Delta\in\rho(X)
\Rightarrow \Theta\in\rho(X)).$$
We will show by structural induction on $\phi$ that
$$\Delta\lift{\BISI} \Theta \Rightarrow (\Delta\in\Op{\phi}_\rho \Rightarrow
\Theta\in\Op{\phi}_\rho)$$ for any environment $\rho$ that is
compatible with $\lift{\BISI}$. By restricting attention to closed
$\phi$ this implies the soundness of $\cL^\mu_\BISI$ w.r.t.\ $\lift{\BISI}$.
\begin{itemize}
\item Let $\Delta\lift{\BISI} \Theta$ and $\Delta\in\Op{\diam{a}\phi}_\rho$.
  Then $\Delta\ar{a}\Delta'$ and $\Delta'\in\Op{\phi}_\rho$
  for some $\Delta'$. By Lemma~\ref{transfer property}, there is some
  $\Theta'$ with $\Theta\ar{a}\Theta'$ and $\Delta'\lift{\BISI} \Theta'$.
  By induction we have $\Theta'\in\Op{\phi}_\rho$,
  thus $\Theta\models\diam{a}\phi$.
\item Let $\Delta\lift{\BISI} \Theta$ and
  $\Delta\in\Op{\boxm{a}\phi}_\rho$. Suppose $\Theta\ar{a}\Theta'$.
  By Lemma~\ref{transfer property}, and symmetry, there is a $\Delta'$ with $\Delta\ar{a}\Delta'$
  and $\Delta'\lift{\BISI} \Theta'$. As $\Delta\in\Op{\boxm{a}\phi}_\rho$
  it must be that $\Delta'\in\Op{\phi}_\rho$, and by induction we have
  $\Theta'\in\Op{\phi}_\rho$. Thus $\Theta\in\Op{\boxm{a}\phi}_\rho$.
\item Let $\Delta\lift{\BISI} \Theta$ and $\Delta\in \Op{\bigwedge_{i\in I}\phi_i}_\rho$.
  Then $\Delta\in\Op{\phi_i}_\rho$ for all $i\inp I$. So by induction $\Theta\in\Op{\phi_i}_\rho$,
  and we have $\Theta\in\Op{\bigwedge_{i\in I}\phi_i}_\rho$.
\item The case $\Delta\lift{\BISI} \Theta$ and $\Delta\in \Op{\bigvee_{i\in I}\phi_i}_\rho$ goes likewise.
\item Let $\Delta\lift{\BISI} \Theta$ and $\Delta\in\Op{\neg\phi}$. So
  $\Delta\not\in\Op{\phi}$, and by induction (and the symmetry of
  $\lift{\BISI}$) we have $\Theta\not\in\Op{\phi}$. Thus
  $\Theta\in\Op{\neg\phi}$.
\item Let $\Delta\lift{\BISI} \Theta$ and $\Delta\in\Op{\bigoplus_{i\in I}p_i\cdot\phi_i}_\rho$. So
  $\Delta=\sum_{i\in i}p_i\cdot\Delta_i$ and for all $i\in I$
  we have $\Delta_i\in\Op{\phi_i}_\rho$.
  Since $\Delta\lift{\BISI} \Theta$, by
  Proposition~\ref{prop:lifting}(2) we have $\Theta=\sum_{i\in
  I}p_i\cdot\Theta_i$ and $\Delta_i\lift{\BISI}\Theta_i$.
  So by induction we have $\Theta_i\inp\Op{\phi_i}_\rho$ for all $i\inp I$.
  Therefore, $\Theta\in\Op{\bigoplus_{i\in I}p_i\cdot\phi_i}_\rho$.
  The case $\Delta\in\Op{\bigoplus_{i\in I}\phi_i}_\rho$ goes likewise.
\item Let $\Delta\lift{\BISI} \Theta$ and $\Delta\in\Op{{\downarrow}\phi}_\rho$. So
  for all $s\in\support{\Delta}$ we have $\pdist{s}\in\Op{\phi}_\rho$.
  From $\Delta\lift{\BISI} \Theta$ it follows
  that for each $t\inp\support{\Theta}$ there is an
  $s\inp\support{\Delta}$ with $s\BISI t$, thus
  $\pdist{s}\lift{\BISI}\pdist{t}$.
  So by induction we have $\pdist{t}\inp\Op{\phi}_\rho$ for all $t\inp\support{\Theta}$.
  Therefore, $\Theta\in\Op{{\downarrow}\phi}_\rho$.
\item Let $\Delta\lift{\BISI} \Theta$ and $\Delta\in\Op{X}_\rho = \rho(X)$. Then $\Theta \in
  \Op{X}_\rho$ because $\rho$ is compatible with $\lift{\BISI}$.
\item Suppose $\Delta\lift{\BISI} \Theta$ and $\Theta\not\in\Op{\mu X. \phi}_\rho$.
  Then $\exists V\subseteq \dist{S}$ with $\Theta\not\in V$ and
  $\Op{\phi}_{\rho[X\mapsto V]} \subseteq V$. Let $V':=\{\Delta' \mid
  \forall \Theta'. (\Delta'\lift{\BISI}\Theta' \Rightarrow \Theta' \in V)\}$.
  Then $\Delta\not\in V'$. It remains to show that
  $\Op{\phi}_{\rho[X\mapsto V']} \subseteq V'$, because this implies
  $\Delta\not\in\Op{\mu X. \phi}_\rho$, which has to be shown.

  So let $\Delta' \in \Op{\phi}_{\rho[X\mapsto V']}$.
  Take any $\Theta'$ with $\Delta' \lift{\BISI} \Theta'$.
  By construction of $V'$, the environment $\rho[X\mapsto V']$ is
  compatible with $\lift{\BISI}$. Therefore, the induction hypothesis
  yields $\Theta' \in \Op{\phi}_{\rho[X\mapsto V']}$.
  We have $V'\subseteq V$, and as $\Op{\ }$ is monotonic we obtain
  $\Theta' \in \Op{\phi}_{\rho[X\mapsto V']} \subseteq \Op{\phi}_{\rho[X\mapsto V]} \subseteq V$.
  It follows that $\Delta'\in V'$.
\item Suppose $\Delta\lift{\BISI} \Theta$ and $\Delta\in\Op{\nu X. \phi}_\rho$.
  Then $\exists V\subseteq \dist{S}$ with $\Delta\in V$ and
  $\Op{\phi}_{\rho[X\mapsto V]} \supseteq V$. Let $V':=\{\Theta' \mid
  \exists \Delta' \inp V.~ \Delta'\lift{\BISI}\Theta'\}$.
  Then $\Theta\in V'$. It remains to show that
  $\Op{\phi}_{\rho[X\mapsto V']} \supseteq V'$, because this implies
  $\Theta\in\Op{\nu X. \phi}_\rho$, which has to be shown.

  So let $\Theta' \not\in \Op{\phi}_{\rho[X\mapsto V']}$.
  Take any $\Delta'$ with $\Delta' \lift{\BISI} \Theta'$.
  By construction of $V'$, the environment $\rho[X\mapsto V']$ is
  compatible with $\lift{\BISI}$. Therefore, the induction hypothesis
  yields $\Delta' \not\in \Op{\phi}_{\rho[X\mapsto V']}$.
  We have $V'\supseteq V$, and as $\Op{\ }$ is monotonic we obtain
  $\Delta' \not\in \Op{\phi}_{\rho[X\mapsto V']} \supseteq \Op{\phi}_{\rho[X\mapsto V]} \supseteq V$.
  It follows that $\Theta'\not\in V'$.
\end{itemize}

(Completeness) Let $\CR=\sset{(s,t)\mid \pdist{s}\equiv_{\BISI}
\pdist{t}}$. We show that $\CR$ is a strong probabilistic
bisimulation. Suppose $s\ \CR\ t$ and $s\ar{a}\Delta$. We have to
show that there is some $\Theta$ with $\pdist{t}\ar{a}\Theta$ and
$\Delta \lift{\CR} \Theta$. Consider the set
\[T:=\{\Theta \mid \pdist{t}\ar{a}\Theta \wedge \Theta=\!\!\!\sum_{s'\in\support{\Delta}}\Delta(s')\cdot \Theta
_{s'}\wedge \exists s'\in\support{\Delta},\exists
t'\in\support{\Theta_{s'}}:\pdist{s'}\not\equiv_{\BISI}\pdist{t'}\}\]
 For each $\Theta\in T$ there must be some
 $s'_\Theta\in\support{\Delta}$ and
 $t'_\Theta\in\support{\Theta_{s'_\Theta}}$
and a formula $\phi_{\Theta}$ with
$\pdist{s'_\Theta}\models\phi_{\Theta}$ but
$\pdist{t'_\Theta}\not\models\phi_{\Theta}$.
So $\pdist{s'}\models\bigwedge_{\sset{\Theta\in T\mid s'_\Theta =
s'}}\phi_\Theta$ for each $s'\in\support{\Delta}$,
and for each $\Theta\in T$ with $s'_\Theta=s'$
there is some $t'_{\Theta}\in\support{\Theta_{s'}}$ with
$\pdist{t'_{\Theta}}\not\models \bigwedge_{\sset{\Theta\in T\mid
s'_\Theta = s'}}\phi_\Theta$. Let
\[\phi:=\diam{a}\bigoplus_{s'\in\support{\Delta}}\Delta(s')\cdot{\downarrow}\!\!\!\bigwedge_{\sset{\Theta\in T\mid s'_{\Theta}=s'}}\phi_{\Theta}.\]
It is clear that $\pdist{s}\models\phi$, hence
$\pdist{t}\models\phi$ by $s \CR t$. It follows that
there must be a $\Theta^\ast$ with $\pdist{t}\ar{a}\Theta^\ast$,
$\Theta^\ast=\sum_{s'\in\support{\Delta}}\Delta(s')\cdot\Theta^\ast_{s'}$
and for each $s'\inp\support{\Delta},~ t'\inp\support{\Theta^\ast_{s'}}$ we have
$\pdist{t'}\models\bigwedge_{\sset{\Theta\in T\mid s'_\Theta =
s'}}\phi_\Theta$. This means that $\Theta^\ast\not\in T$ and hence
for each $s'\inp\support{\Delta},~ t'\inp\support{\Theta^\ast_{s'}}$ we
have $\pdist{s'}\equiv_\BISI \pdist{t'}$, i.e.\ $s' \mathbin{\CR} t'$.
Consequently, we obtain $\Delta\lift{\CR} \Theta^\ast\!\!$.\\ By symmetry
all transitions of $t$ can be matched up by transitions of $s$.
\hfill\qed
\end{proof}

Modal characterisation of strong and weak probabilistic bisimulation
has been studied in \cite{PS07}. It is also based on a
probabilistic extension of the Hennessy-Milner logic.
Instead of our modalities $\bigoplus$ and $\downarrow$ they use a
modality $[\cdot ]_p$. Intuitively, a
distribution $\Delta$ satisfies the formula $[\phi]_p$ when the set
of states satisfying $\phi$ is measured by $\Delta$ with probability
at least $p$. So the formula $[\phi]_p$ can be expressed by our
logics in terms of the probabilistic choice $\bigoplus_{i\in
I}p_i\cdot\phi_i$ by setting $I\mathbin=\{1,2\}$, $p_1\mathbin=p$, $p_2\mathbin=1\mathord-p$,
$\phi_1\mathbin={\downarrow}\phi$, and $\phi_2\mathbin=\true$.
Furthermore, instead of our modality $\diam{a}$, they use a modality
$\cdot\!\!\!\Diamond a$ that can be expressed in our logic by
$\cdot\!\!\!\Diamond a \phi = \diam{a}{\downarrow}\phi$. 
We conjecture that our modalities $\diam{a}$ and $\bigoplus$ cannot be
expressed in terms of the logic of \cite{PS07}, and that a logic of
that type is unsuitable for characterising forward simulation or
failure simulation.

When restricted to deterministic pLTSs
(i.e., for each state and for each action, there exists at most one
outgoing transition), probabilistic bisimulations can be
characterised by simpler forms of logics, as observed in
\cite{LS91,DEP98,PS07}.

\section{Concluding remarks}\label{s:cr}
We have considered characteristic equation systems consisting of
equations of the form $X_s = \phi_s$ where, for each refinement preorder we
have characterised, $\phi_s$ is displayed in Table~\ref{tb:ces}.
\begin{table}
\begin{center}
\begin{tabular}{@{}|c|c|r|@{}}
\hline \emph{preorder} &  $\phi_s$  & $X_\Delta$~~~~~~~~~~~~
\\[3pt] \hline\hline strong
prob.\ bis. &
$(\bigwedge_{s\arh{a}\Delta}\diam{a}X_{\Delta})\wedge
(\bigwedge_{a\in\Act_\tau}\boxm{a}\bigvee_{\pdist{s}\arh{a}\Delta}
X_{\Delta} )$
& $\bigoplus_{s\in\support{\Delta}}\Delta(s)\cdot {\downarrow}X_s$
 \\[3pt] \hline
strong prob.\ sim. &
$\bigwedge_{s\arh{a}\Delta}\diam{a}X_{\Delta}$
& $\bigoplus_{s\in\support{\Delta}}\Delta(s)\cdot {\downarrow}X_s$
\\[3pt] \hline\hline weak
prob.\ bis. &
$(\bigwedge_{s\arh{a}\Delta}\diam{a}X_{\Delta})\wedge
(\bigwedge_{a\in\Act_\tau}\boxm{a}\bigvee_{\pdist{s}\darhhat{a}\Delta}
X_{\Delta} )$
& $\bigoplus_{s\in\support{\Delta}}\Delta(s)\cdot {\downarrow}X_s$
 \\[3pt] \hline
weak prob.\ sim. &
$\bigwedge_{s\arh{a}\Delta}\diam{a}X_{\Delta}$  
& $\bigoplus_{s\in\support{\Delta}}\Delta(s)\cdot {\downarrow}X_s$
\\[3pt] \hline forward sim. &
$\bigwedge_{s\arh{a}\Delta}\diam{a}X_{\Delta}$
& $\bigoplus_{s\in\support{\Delta}}\Delta(s)\cdot X_s$
\\[3pt]
\hline failure sim.\ & \multicolumn{1}{l|}{$\left\{\begin{array}{@{}ll@{}}
\bigwedge_{s\arh{a}\Delta}\diam{a}X_{\Delta} & {\rm if}\ s\arh{\tau}\\[3pt]
(\bigwedge_{s\arh{a}\Delta}\diam{a}X_{\Delta})\wedge
\refuse{\setof{a\!}{\!s\nar{a}}}  & {\rm otherwise}
\end{array}\right.$}
& $\bigoplus_{s\in\support{\Delta}}\Delta(s)\cdot X_s$
\\[3pt] \hline
\end{tabular}
\end{center}
\caption{Characteristic equation systems $E: X_s = \phi_s$}\label{tb:ces}
\end{table}
Although they are in similar forms, the interpretations of formulae
$\diam{a}\phi$ and $\boxm{a}\phi$ change from the strong to the weak
case (Table~\ref{f:semantics}).

For the strong and weak probabilistic (bi)simulation, we could also have
used a state-based logic. To be precise, the modalities $\bigwedge$,
$\bigvee$, $\neg$, $\mu$ and $\nu$ would be interpreted on states
rather than distributions, $\bigoplus$ remains interpreted on
distributions, $\diam{a}$ and $\boxm{a}$ take a distribution-interpreted
formula as argument and return a state-interpreted formula, and
$\downarrow$ does just the reverse:
\begin{center}
$\begin{array}{|rcl@{\qquad\qquad\mbox{so}\qquad\qquad}rcl|}
\hline
&&\multicolumn{4}{l|}{~}\\[-8pt]
\Op{\diam{a}\phi}_\rho & = & \multicolumn{4}{l|}{ \setof{s\in S}{\exists
  \Delta':\pdist{s} \ar{a} \Delta'\ \wedge\ \Delta'\in\Op{\phi}_\rho} }\\
\Op{\boxm{a}\phi}_\rho & = &  \multicolumn{4}{l|}{\setof{s\in S}{\forall
  \Delta':\pdist{s} \ar{a} \Delta'\ \Rightarrow\ \Delta'\in\Op{\phi}_\rho}}\\
\Op{\downarrow\!\!\phi}_\rho & = & \multicolumn{4}{l|}{
\setof{\Delta\in\dist{S}}{\forall s\in\support{\Delta}: s\in\Op{\phi}_\rho}} \\[2pt]
\hline
\end{array}$
\end{center}
In fact, all our results and proofs are applicable to such a
state-based logic, with no significant change. Now a treatment of the
original strong bisimulation of \cite{HJ90} and the strong simulation of
\cite{JL91} proceeds exactly as this state-based treatment of strong
probabilistic (bi)simulation, but using $s$ rather than $\pdist{s}$ in
the definition of $\diam{a}$ and $\boxm{a}$.

There are many other behavioural relations studied in the
literature. It would be interesting to see if our approach of
deriving characteristic formulae applies to some of them. For
instance, probabilistic may and must testing preorders have a close
relationship with forward and failure simulations respectively
\cite{DGHM08}, so it appears promising to derive characteristic
formulae for them.

Another research direction is to exploit characteristic
formulae for deciding probabilistic behavioural relations and compare
it with other methods of deciding behavioural relations.

\vspace{-1em}
\section*{Acknowledgement}
\vspace{-1.5mm}
We thank Chenyi Zhang for interesting discussions about an early version of this paper.

\bibliographystyle{eptcs}
\bibliography{bibfile}

\begin{thebibliography}{10}
\providecommand{\bibitemstart}[1]{\bibitem{#1}}
\providecommand{\bibitemend}{}
\providecommand{\bibliographystart}{}
\providecommand{\bibliographyend}{}
\providecommand{\url}[1]{\texttt{#1}}
\providecommand{\urlprefix}{Available at }
\providecommand{\bibinfo}[2]{#2}
\bibliographystart

\bibitemstart{BS01}
\bibinfo{author}{E.~Bandini} \& \bibinfo{author}{R.~Segala}
  (\bibinfo{year}{2001}): \emph{\bibinfo{title}{Axiomatizations for
  Probabilistic Bisimulation}}.
\newblock In: {\sl \bibinfo{booktitle}{Proceedings of the 28th International
  Colloquium on Automata, Languages and Programming}}, {\sl
  \bibinfo{series}{{\rm LNCS}}} \bibinfo{volume}{2076},
  \bibinfo{publisher}{Springer}, pp. \bibinfo{pages}{370--381}.
\bibitemend

\bibitemstart{Chr90}
\bibinfo{author}{I.~Christoff} (\bibinfo{year}{1990}):
  \emph{\bibinfo{title}{Testing equivalences and fully abstract models for
  probabilistic processes}}.
\newblock In: {\sl \bibinfo{booktitle}{Proceedings of the 3rd International
  Conference on Concurrency Theory}}, {\sl \bibinfo{series}{{\rm LNCS}}}
  \bibinfo{volume}{458}, \bibinfo{publisher}{Springer}, pp.
  \bibinfo{pages}{126--140}.
\bibitemend

\bibitemstart{CIN05}
\bibinfo{author}{R.~Cleaveland}, \bibinfo{author}{S.~Purushothaman Iyer} \&
  \bibinfo{author}{M.~Narasimha} (\bibinfo{year}{2005}):
  \emph{\bibinfo{title}{Probabilistic temporal logics via the modal
  mu-calculus}}.
\newblock {\sl \bibinfo{journal}{Theoretical Computer Science}}
  \bibinfo{volume}{342}(\bibinfo{number}{2-3}), pp. \bibinfo{pages}{316--350}.
\bibitemend

\bibitemstart{DWTC09}
\bibinfo{author}{P.R. D'Argenio}, \bibinfo{author}{N.~Wolovick},
  \bibinfo{author}{P.S. Terraf} \& \bibinfo{author}{P.~Celayes}
  (\bibinfo{year}{2009}): \emph{\bibinfo{title}{Nondeterministic Labeled Markov
  Processes: Bisimulations and Logical Characterization}}.
\newblock In: {\sl \bibinfo{booktitle}{Proceedings of the 6th International
  Conference on the Quantitative Evaluation of Systems}},
  \bibinfo{publisher}{IEEE Computer Society}, pp. \bibinfo{pages}{11--20}.
\bibitemend

\bibitemstart{DD09}
\bibinfo{author}{Y.~Deng} \& \bibinfo{author}{W.~Du} (\bibinfo{year}{2009}):
  \emph{\bibinfo{title}{A Local Algorithm for Checking Probabilistic
  Bisimilarity}}.
\newblock In: {\sl \bibinfo{booktitle}{Proceedings of the 4th International
  Conference on Frontier of Computer Science and Technology}},
  \bibinfo{publisher}{IEEE Computer Society}, pp. \bibinfo{pages}{401--407}.
\bibitemend

\bibitemstart{DGHM08}
\bibinfo{author}{Y.~Deng}, \bibinfo{author}{R.J. van Glabbeek},
  \bibinfo{author}{M.~Hennessy} \& \bibinfo{author}{C.C. Morgan}
  (\bibinfo{year}{2008}): \emph{\bibinfo{title}{Characterising Testing
  Preorders for Finite Probabilistic Processes}}.
\newblock {\sl \bibinfo{journal}{Logical Methods in Computer Science}}
  \bibinfo{volume}{4}(\bibinfo{number}{4}):\bibinfo{eid}{4}.
\newblock \urlprefix\url{http://dx.doi.org/10.2168/LMCS-4(4:4)2008}.
\bibitemend

\bibitemstart{DGHM09}
\bibinfo{author}{Y.~Deng}, \bibinfo{author}{R.J. van Glabbeek},
  \bibinfo{author}{M.~Hennessy} \& \bibinfo{author}{C.C. Morgan}
  (\bibinfo{year}{2009}): \emph{\bibinfo{title}{Testing Finitary Probabilistic
  Processes}}.
\newblock In: \bibinfo{editor}{M.~Bravetti} \& \bibinfo{editor}{G.~Zavattaro},
  editors: {\sl \bibinfo{booktitle}{Proceedings of the 20th International
  Conference on Concurrency Theory}}, \bibinfo{series}{{\rm LNCS}},
  \bibinfo{publisher}{Springer}, pp. \bibinfo{pages}{274--288}.
\newblock \urlprefix\url{http://dx.doi.org/10.1007/978-3-642-04081-8_19}.
\bibitemend

\bibitemstart{DGHMZ07}
\bibinfo{author}{Y.~Deng}, \bibinfo{author}{R.J. van Glabbeek},
  \bibinfo{author}{M.~Hennessy}, \bibinfo{author}{C.C. Morgan} \&
  \bibinfo{author}{C.~Zhang} (\bibinfo{year}{2007}):
  \emph{\bibinfo{title}{Remarks on Testing Probabilistic Processes}}.
\newblock {\sl \bibinfo{journal}{Electronic Notes in Theoretical Computer
  Science}} \bibinfo{volume}{172}, pp. \bibinfo{pages}{359--397}.
\bibitemend

\bibitemstart{DP07}
\bibinfo{author}{Y.~Deng} \& \bibinfo{author}{C.~Palamidessi}
  (\bibinfo{year}{2007}): \emph{\bibinfo{title}{Axiomatizations for
  probabilistic finite-state behaviors}}.
\newblock {\sl \bibinfo{journal}{Theoretical Computer Science}}
  \bibinfo{volume}{373}(\bibinfo{number}{1-2}), pp. \bibinfo{pages}{92--114}.
\bibitemend

\bibitemstart{DEP98}
\bibinfo{author}{J.~Desharnais}, \bibinfo{author}{A.~Edalat} \&
  \bibinfo{author}{P.~Panangaden} (\bibinfo{year}{1998}):
  \emph{\bibinfo{title}{A logical characterization of bisimulation for labelled
  {M}arkov processes}}.
\newblock In: {\sl \bibinfo{booktitle}{Proceedings of the 13th Annual IEEE
  Symposium on Logic in Computer Science}}, \bibinfo{publisher}{IEEE Computer
  Society Press}, pp. \bibinfo{pages}{478--489}.
\bibitemend

\bibitemstart{GJS90}
\bibinfo{author}{A.~Giacalone}, \bibinfo{author}{C.-C. Jou} \&
  \bibinfo{author}{S.A. Smolka} (\bibinfo{year}{1990}):
  \emph{\bibinfo{title}{Algebraic reasoning for probabilistic concurrent
  systems}}.
\newblock In: {\sl \bibinfo{booktitle}{Proceedings of IFIP TC 2 Working
  Conference on Programming Concepts and Methods}}, pp.
  \bibinfo{pages}{443--458}.
\bibitemend

\bibitemstart{HJ90}
\bibinfo{author}{H.~Hansson} \& \bibinfo{author}{B.~Jonsson}
  (\bibinfo{year}{1990}): \emph{\bibinfo{title}{A Calculus for Communicating
  Systems with Time and Probabilities}}.
\newblock In: {\sl \bibinfo{booktitle}{Proceedings of the Real-Time Systems
  Symposium (RTSS '90)}}, \bibinfo{publisher}{IEEE Computer Society Press}, pp.
  \bibinfo{pages}{278--287}.
\bibitemend

\bibitemstart{HJ94}
\bibinfo{author}{H.~Hansson} \& \bibinfo{author}{B.~Jonsson}
  (\bibinfo{year}{1994}): \emph{\bibinfo{title}{A Logic for Reasoning about
  Time and Reliability}}.
\newblock {\sl \bibinfo{journal}{Formal Aspects of Computing}}
  \bibinfo{volume}{6}(\bibinfo{number}{5}), pp. \bibinfo{pages}{512--535}.
\bibitemend

\bibitemstart{HM85}
\bibinfo{author}{M.~Hennessy} \& \bibinfo{author}{R.~Milner}
  (\bibinfo{year}{1985}): \emph{\bibinfo{title}{Algebraic laws for
  nondeterminism and concurrency}}.
\newblock {\sl \bibinfo{journal}{Journal of the ACM}}
  \bibinfo{volume}{32}(\bibinfo{number}{1}), pp. \bibinfo{pages}{137--161}.
\bibitemend

\bibitemstart{HK97}
\bibinfo{author}{M.~Huth} \& \bibinfo{author}{M.~Kwiatkowska}
  (\bibinfo{year}{1997}): \emph{\bibinfo{title}{Quantitative analysis and model
  checking}}.
\newblock In: {\sl \bibinfo{booktitle}{Proceedings of the 12th Annual IEEE
  Symposium on Logic in Computer Science}}, \bibinfo{publisher}{IEEE Computer
  Society Press}, pp. \bibinfo{pages}{111--122}.
\bibitemend

\bibitemstart{JL91}
\bibinfo{author}{B.~Jonsson} \& \bibinfo{author}{K.G. Larsen}
  (\bibinfo{year}{1991}): \emph{\bibinfo{title}{Specification and refinement of
  probabilistic processes}}.
\newblock In: {\sl \bibinfo{booktitle}{{\rm Proceedings of the 6th Annual IEEE
  Symposium on} Logic in Computer Science}}, \bibinfo{publisher}{Computer
  Society Press}, pp. \bibinfo{pages}{266--277}.
\bibitemend

\bibitemstart{JW02}
\bibinfo{author}{B.~Jonsson} \& \bibinfo{author}{Wang Yi}
  (\bibinfo{year}{2002}): \emph{\bibinfo{title}{Testing preorders for
  probabilistic processes can be characterized by simulations}}.
\newblock {\sl \bibinfo{journal}{Theoretical Computer Science}}
  \bibinfo{volume}{282}(\bibinfo{number}{1}), pp. \bibinfo{pages}{33--51}.
\bibitemend

\bibitemstart{Koz83}
\bibinfo{author}{D.~Kozen} (\bibinfo{year}{1983}):
  \emph{\bibinfo{title}{Results on the propositional mu-calculus}}.
\newblock {\sl \bibinfo{journal}{Theoretical Computer Science}}
  \bibinfo{volume}{27}, pp. \bibinfo{pages}{333--354}.
\bibitemend

\bibitemstart{LS91}
\bibinfo{author}{K.G. Larsen} \& \bibinfo{author}{A.~Skou}
  (\bibinfo{year}{1991}): \emph{\bibinfo{title}{Bisimulation through
  probabilistic testing}}.
\newblock {\sl \bibinfo{journal}{Information and Computation}}
  \bibinfo{volume}{94}(\bibinfo{number}{1}), pp. \bibinfo{pages}{1--28}.
\bibitemend

\bibitemstart{Lowe95}
\bibinfo{author}{G.~Lowe} (\bibinfo{year}{1995}):
  \emph{\bibinfo{title}{Probabilistic and Prioritized Models of Timed {CSP}}}.
\newblock {\sl \bibinfo{journal}{Theoretical Computer Science}}
  \bibinfo{volume}{138}, pp. \bibinfo{pages}{315--352}.
\bibitemend

\bibitemstart{MM97}
\bibinfo{author}{A.K. McIver} \& \bibinfo{author}{C.C. Morgan}
  (\bibinfo{year}{1997}): \emph{\bibinfo{title}{An expectation-based model for
  probabilistic temporal logic}}.
\newblock \bibinfo{type}{Technical Report} \bibinfo{number}{PRG-TR-13-97},
  \bibinfo{institution}{Oxford University Computing Laboratory}.
\bibitemend

\bibitemstart{MM07}
\bibinfo{author}{A.K. McIver} \& \bibinfo{author}{C.C. Morgan}
  (\bibinfo{year}{2007}): \emph{\bibinfo{title}{Results on the Quantitative
  Mu-Calculus}}.
\newblock {\sl \bibinfo{journal}{ACM Transactions on Computational Logic}}
  \bibinfo{volume}{8}(\bibinfo{number}{1}).
\bibitemend

\bibitemstart{Mil89a}
\bibinfo{author}{R.~Milner} (\bibinfo{year}{1989}):
  \emph{\bibinfo{title}{Communication and Concurrency}}.
\newblock \bibinfo{publisher}{Prentice Hall}.
\bibitemend

\bibitemstart{mislove03}
\bibinfo{author}{M.W. Mislove}, \bibinfo{author}{J.~Ouaknine} \&
  \bibinfo{author}{J.~Worrell} (\bibinfo{year}{2004}):
  \emph{\bibinfo{title}{Axioms for Probability and Nondeterminism}}.
\newblock {\sl \bibinfo{journal}{Electronic Notes in Theoretical Computer
  Science}} \bibinfo{volume}{96}, pp. \bibinfo{pages}{7--28}.
\bibitemend

\bibitemstart{Mul98}
\bibinfo{author}{M.~M\"{u}ller-Olm} (\bibinfo{year}{1998}):
  \emph{\bibinfo{title}{Derivation of Characteristic Formulae}}.
\newblock {\sl \bibinfo{journal}{Electronic Notes in Theoretical Computer
  Science}} \bibinfo{volume}{18}, pp. \bibinfo{pages}{159--170}.
\bibitemend

\bibitemstart{PS07}
\bibinfo{author}{A.~Parma} \& \bibinfo{author}{R.~Segala}
  (\bibinfo{year}{2007}): \emph{\bibinfo{title}{Logical Characterizations of
  Bisimulations for Discrete Probabilistic Systems}}.
\newblock In: {\sl \bibinfo{booktitle}{Proceedings of the 10th International
  Conference on Foundations of Software Science and Computational Structures}},
  {\sl \bibinfo{series}{{\rm LNCS}}} \bibinfo{volume}{4423},
  \bibinfo{publisher}{Springer}, pp. \bibinfo{pages}{287--301}.
\bibitemend

\bibitemstart{Put94}
\bibinfo{author}{M.L. Puterman} (\bibinfo{year}{1994}):
  \emph{\bibinfo{title}{Markov Decision Processes}}.
\newblock \bibinfo{publisher}{Wiley}.
\bibitemend

\bibitemstart{RS97}
\bibinfo{author}{Y.S. Ramakrishna} \& \bibinfo{author}{S.A. Smolka}
  (\bibinfo{year}{1997}): \emph{\bibinfo{title}{Partial-order reduction in the
  weak modal mu-calculus}}.
\newblock In: \bibinfo{editor}{A.~Mazurkiewicz} \&
  \bibinfo{editor}{J.~Winkowski}, editors: {\sl \bibinfo{booktitle}{Proceedings
  of the 8th International Conference on Concurrency Theory}}, {\sl
  \bibinfo{series}{{\rm LNCS}}} \bibinfo{volume}{1234},
  \bibinfo{publisher}{Springer}, pp. \bibinfo{pages}{5--24}.
\bibitemend

\bibitemstart{Seg95}
\bibinfo{author}{R.~Segala} (\bibinfo{year}{1995}):
  \emph{\bibinfo{title}{Modeling and Verification of Randomized Distributed
  Real-Time Systems}}.
\newblock \bibinfo{type}{Technical Report} \bibinfo{number}{MIT/LCS/TR-676},
  \bibinfo{institution}{PhD thesis, MIT, Dept. of EECS}.
\bibitemend

\bibitemstart{SL94}
\bibinfo{author}{R.~Segala} \& \bibinfo{author}{N.A. Lynch}
  (\bibinfo{year}{1994}): \emph{\bibinfo{title}{Probabilistic Simulations for
  Probabilistic Processes}}.
\newblock In: {\sl \bibinfo{booktitle}{Proceedings of the 5th International
  Conference on Concurrency Theory}}, {\sl \bibinfo{series}{{\rm LNCS}}}
  \bibinfo{volume}{836}, \bibinfo{publisher}{Springer}, pp.
  \bibinfo{pages}{481--496}.
\bibitemend

\bibitemstart{SI94}
\bibinfo{author}{B.~Steffen} \& \bibinfo{author}{A.~Ing\'{o}lfsd\'{o}ttir}
  (\bibinfo{year}{1994}): \emph{\bibinfo{title}{Characteristic Formulae for
  Processes with Divergence}}.
\newblock {\sl \bibinfo{journal}{Information and Computation}}
  \bibinfo{volume}{110}, pp. \bibinfo{pages}{149--163}.
\bibitemend

\bibitemstart{Tar55}
\bibinfo{author}{A.~Tarski} (\bibinfo{year}{1955}): \emph{\bibinfo{title}{A
  lattice-theoretical fixpoint theorem and its application}}.
\newblock {\sl \bibinfo{journal}{Pacific Journal of Mathematics}}
  \bibinfo{volume}{5}(\bibinfo{number}{2}), pp. \bibinfo{pages}{285--309}.
\bibitemend

\bibitemstart{TKP05}
\bibinfo{author}{R.~Tix}, \bibinfo{author}{K.~Keimel} \& \bibinfo{author}{G.D.
  Plotkin} (\bibinfo{year}{2005}): \emph{\bibinfo{title}{Semantic Domains for
  Combining Probability and Non-Determinism}}.
\newblock {\sl \bibinfo{journal}{Electronic Notes in Theoretical Computer
  Science}} \bibinfo{volume}{129}, pp. \bibinfo{pages}{1--104}.
\bibitemend

\bibitemstart{WL92}
\bibinfo{author}{Wang Yi} \& \bibinfo{author}{K.G. Larsen}
  (\bibinfo{year}{1992}): \emph{\bibinfo{title}{Testing Probabilistic and
  Nondeterministic Processes}}.
\newblock In: {\sl \bibinfo{booktitle}{Proceedings of the IFIP TC6/WG6.1
  Twelfth International Symposium on Protocol Specification, Testing and
  Verification}}, {\sl \bibinfo{series}{IFIP Transactions}}
  \bibinfo{volume}{C-8}, \bibinfo{publisher}{North-Holland}, pp.
  \bibinfo{pages}{47--61}.
\bibitemend

\bibliographyend
\end{thebibliography}

\end{document}